\documentclass[12pt]{article}

\usepackage[leqno]{amsmath}
\usepackage{amsthm}
\usepackage{graphicx}
\usepackage{setspace}
\usepackage[round]{natbib}
\usepackage{amssymb}
\usepackage{hyperref}
\usepackage{multirow}
\usepackage{rotating}
\usepackage{amsfonts}
\usepackage{mathtools}
\usepackage{pdflscape}
\usepackage{longtable}
\usepackage{chngcntr}
\usepackage{tikz}
\usepackage[capposition=top]{floatrow}
\usepackage{geometry}
\usepackage{booktabs}
\usepackage{tabularx}

\usepackage{bm}
\usepackage{comment}
\usepackage{soul}

\geometry{left=1.25in,right=1.25in,top=1.25in,bottom=1.25in}

\newtheorem{theorem}{Theorem}[section]
\newtheorem{assumption}{Assumption}[section]
\newtheorem{corollary}[theorem]{Corollary}
\newtheorem{lemma}[theorem]{Lemma}

\theoremstyle{definition}
\newtheorem{definition}{Definition}[section]
\newtheorem{example}{Example}
\newtheorem*{example*1}{Example 1}
\theoremstyle{remark}

\numberwithin{equation}{section}
\vfuzz4pt
\hfuzz4pt

\newcommand{\abs}[1]{\left\vert#1\right\vert}

\newcommand{\Real}{\mathbb R}

\newcommand{\parZ}{(\bm{Z})}

\newcommand{\commentOut}[1]{}

\def\Perp{\perp\!\!\!\perp}

%
%
\newcommand{\dpart}[2]{\frac{\partial #1}{\partial #2}}
\newcommand{\ddpart}[3]{\frac{\partial^2 #1}{\partial #2 \partial #3}}

%
%
\def\uniset{{\rm 1\kern-.40em 1}}

\begin{document}


\onehalfspacing

\title{\large\textbf{Identifying Effects of Multivalued Treatments}\footnote{We are grateful to the co-editor and to
four anonymous referees for their comments. We also thank  St\'ephane Bonhomme, Eric Gautier, Joe Hotz, Thierry Magnac, Lars Nesheim, Rodrigo Pinto, Adam Rosen, Christoph Rothe, Azeem Shaikh, Alex Torgovitsky, Ed Vytlacil, and especially Jim Heckman for their very useful suggestions. We also benefited from  comments of seminar audiences  in Cambridge, Chicago, Duke, Georgetown, Harvard, MIT, NYU, UCL, and Yale.
We would like to thank Junlong Feng and Cameron LaPoint for proofreading the paper.
This research has received financial support from the European Research Council under the European Community's Seventh Framework Program FP7/2007-2013 grant agreement No. 295298-DYSMOIA and 
under the Horizon 2020 Framework Program grant agreement No. 646917-ROMIA.}
}
\author{\textbf{Sokbae Lee}\footnote{Columbia University and Institute for Fiscal Studies, sl3841@columbia.edu.} \and \textbf{Bernard Salani\'e}\footnote{Columbia University, bsalanie@columbia.edu.}}
\date{April 20, 2018}

\maketitle

\begin{abstract}
	Multivalued treatment models have typically been studied under restrictive assumptions: ordered choice, and more recently unordered monoto\-ni\-ci\-ty.  We show how   treatment effects can be identified in  a more general class of models  that allows for multidimensional unobserved heterogeneity. Our results rely on two main assumptions: treatment assignment must be a measurable function of threshold-crossing rules, and enough continuous instruments must be available. We illustrate our approach for  several classes of models. \\

\textsc{Keywords}: Identification, selection, multivalued treatments, instruments, monotonicity, multidimensional unobserved heterogeneity.
\end{abstract}

\clearpage

\section{Introduction}
 Since the seminal work of \cite{heckman1979sample}, selection problems have been  one of the main themes in both empirical economics and econometrics.
One popular  approach in the literature  is to rely on  instruments 
to uncover the patterns of the 
self-selection into different levels of treatments, and thereby to identify treatment effects. 
The main branches of this literature are  the local average treatment effect (LATE) framework of \cite{LATE1994} and the local instrumental variables (LIV) framework of \cite{MIV2005}. 

The LATE and LIV frameworks emphasize different parameters of interest and suggest different estimation methods. However, they both focus on binary treatments, and   restrict selection mechanisms to be ``monotonic''.  \cite{vytlacil2002independence} establishes that the LATE and LIV approaches rely on the same  monotonicity assumption. For binary treatment models, these approaches   require that selection into treatment be governed by a single index crossing a threshold.

Many real-world selection problems  are not adequately described by
 single-crossing models. The literature has developed ways of dealing with less restrictive models of assignment to treatment.  \cite{angrist1995two} analyze ordered choice models. \cite{HUV2006,HUV2008} show how (depending on restrictions and instruments) a variety of   treatment effects can be identified in  discrete choice models that are additively separable in instruments and errors. More recently, \cite{heckmanpinto-pdt} define an ``unordered monotonicity'' condition that is weaker than monotonicity when treatment is multivalued. They show that given unordered monotonicity, several treatment effects can be identified.  
  
Even the most generally applicable of these approaches can still  only deal with models of treatment that are formally analogous to  an additively separable  discrete choice model, as proved in Section~6 of \cite{heckmanpinto-pdt}.   The key condition is that the data contain changes in instruments that create only {\em  one-way flows\/} in or out of the treatment cells the analyst is interested in.  In binary treatment models, this is exactly the meaning of monotonicity: there cannot be both compliers and defiers, so that LATE estimates the average treatment effect on compliers\footnote{\cite{deChaisemartin} shows that  under a weaker condition, LATE estimates the average treatment effect on a specific subset of the compliers.}.  Things are somewhat more complex in  multivalued treatment models. Unless selection only depends on one function of the instruments, there exist changes in instruments that generate two-way flows in and out of any treatment cell. Unordered monotonicity requires that we observe \emph{some\/} changes in instruments  that only induce one way-flows.

 This is still too restrictive for important applications. For instance, many transfer programs  (or many educational tests) rely on several criteria and combine them in complex ways to assign agents to treatments; and agents add their own objectives and criteria to the list.   An additively separable  discrete choice model  may not describe such  a selection mechanism. 
To see this, start from a very simple and useful application: the  double hurdle model, which treats agents only if  each of {\em two\/} indices passes a threshold\footnote{See e.g. \cite{Poirier80} for a parametric version of this model.}. While this is a binary treatment model, the existence of two thresholds makes it non-monotonic: if a change in instruments increases a threshold but reduces the other, some agents  move into the treatment group and some  move out of it. 

  The double hurdle model is still  unordered monotonic, as any change in instruments that moves the two thresholds in the same direction only creates one-way flows. Now let us change the structure of the model slightly: there are still two thresholds, but we only treat agents who are above one threshold and below the other. 
As we will  see in Section~\ref{sec:model},  {\em any\/} change in instruments that moves both thresholds  
 generates two-way flows, and standard approaches to identification fail.
This  model of  {\em selection with two-way flows\/}  cannot be represented by a  discrete choice model; it is formally equivalent to a discrete choice model with three alternatives in which the analyst only observes  partitioned choices (e.g. the analyst only observes whether  alternative~2 is chosen or not).  Our identification results apply to this variant of the double hurdle model, and to all treatment models generated by a finite family of threshold-crossing rules. In fact, one way to describe our contribution is that it encompasses all   additively  separable discrete choice  models in which the analyst only observes a partition of the set of alternatives.

To illustrate the applicability of our framework, assume that assignment to treatment can be described by a random utility model of choice. Now imagine that, as is common in practice,  the analyst only  observes choices between {\em sets\/} of treatments: e.g., various vocational programs have been aggregated into a  ``training'' category in her dataset.  Our methods  allow identification of the effect of these different training programs on outcomes, provided that   continuous instruments shift their  mean utilities. Variables such as distance to the locations of the training centers  or other components of the ``full cost'' of treatment  could serve as instruments in this application.
For another example, consider a dynamic sequence of treatments such as the curriculum of a college student or the career of a worker. This could be represented as a ``decision tree'' in which various threshold-crossing rules govern the path of the individual through time. Again, this type of model can be analyzed using the techniques in this paper.  Here we could use measures of performances of the worker, or the grades of the student, as (quasi) continuous instruments  in order to infer the effect of each of the possible paths on outcomes.
We study related examples  more formally  in Section~\ref{sec:apps}.

 Our analysis allows   selection to be determined by a vector of threshold-crossing rules. Each of these rules compares  a scalar unobservable to a threshold; these unobservables can be correlated with each other and with potential outcomes.  We proceed in two steps. First assume that the thresholds are known to the analyst. We use
their values as
 control variables to deal with
multidimensional unobserved heterogeneity.  One important difference with the unidimensional case is that in our setting LATE-type estimators can only recover a mixture of causal parameters on groups that cross different thresholds, and are therefore harder to interpret. 
We establish conditions under which one can identify   a generalized version of  the marginal treatment effects (MTE) of \cite{MIV2005}, as well as the probability distribution
of unobservables governing the selection mechanism, and more aggregated treatment effects such as the average treatment effect (ATE),  quantile treatment effects, the average treatment effect on the treated (ATT), and the policy-relevant treatment effect (PRTE). 

Since thresholds often are not known a priori, the second step requires identifying them from the data. 
This is highly model-specific and the family of models encompassed in this paper is too large and diverse to allow for a general result. We  limit our discussion to a few applications; in particular, we provide what we believe are new identification theorems for the double-hurdle model.

We  give a detailed comparison of our paper to the existing literature in Section~\ref{sec:literature}.  Let us here mention  a few points in which our paper differs from the literature. Unlike  \cite{imbens2000role}, \cite{hirano2004propensity}, \cite{cattaneo2010efficient}, and \cite{Yang-et-al:16},  we   allow for selection on unobservables.  \cite{gautier-hoderlein} study  binary treatment when selection is driven by a rule that is linear in a vector of  unobservable heterogeneity.
\cite{Lewbel2016}  consider a different non-monotonic rule for binary treatment to identify the average treatment effect.
These two papers  break monotonicity in different ways than ours. We focus on the point identification of marginal treatment effects, unlike the research on partial  identification  (see e.g. 
 \cite{manski1990-AERpp}, \cite{manski1997} and \cite{manski-pepper-2000}).  
\cite{Chesher:03},
 \cite{HoderleinMammen2007}, \cite{FHMV2008}, \cite{imbens2009identification}, 
\cite{DF2015}, and \cite{Torgovitsky2015}   study 
models with continuous endogenous regressors.
Each of these papers develops identification results for various parameters of interest.
Our paper complements this literature by considering multivalued (but not continuous) treatments with more general types of selection mechanisms.

\citet[Appendix B]{HV2007-handbook} and \cite{HUV2008} and more recently \cite{heckmanpinto-pdt} and \cite{pinto-jmp} are  more closely related to our paper. But they focus on the selection induced by  multinomial discrete choice models, whereas our paper  allows for more general selection problems.

The  paper is organized as follows.  Section~\ref{sec:model} sets up our framework; it motivates our central assumptions by way of examples. We present and prove our identification results in Section~\ref{sec:general-results}.  Section~\ref{sec:apps} applies our results to  three important classes of applications, including the   models mentioned in this introduction. 
We relate our contributions to the literature in Section~\ref{sec:literature}.
Finally,  Section~\ref{sec:proof-iden} gives the proof of the main theorem.
  Some further results and  details of the omitted proofs are collected in  Online  Appendices.

\section{The Model and our Assumptions}
\label{sec:model}

We assume throughout that treatments take values in a  finite set of treatments  $\mathcal{K}$.  This set may be naturally ordered, as with different tax rates. But it may not be, as when welfare recipients enroll in  different training schemes for instance; this makes no difference to our results. We assume that treatments are exclusive. This involves no loss of generality as treatment values could easily be redefined otherwise. We denote 
$K=\abs{\mathcal{K}}$ the number of treatments, and we map the set $\mathcal{K}$ into
 $\{0,\ldots,K-1\}$ for notational convenience.

We denote $\{Y_k: k \in \mathcal{K} \}$ the potential outcomes.  Let  $D_k$ be 1 if the $k$ treatment is realized and
0 otherwise. The observed outcome and treatment  are $Y := \sum_{k \in \mathcal{K}} Y_k D_k$ and $D := \sum_{k \in \mathcal{K}} k D_k$, respectively.

In addition to the covariates $\bm{X}$, observed treatment $D$ and outcomes $Y$, the data contain a   random vector $\bm{Z}$ that will serve as instruments.
We  always condition on the value of $\bm{X}$ in our analysis of identification, and thus  suppress it from the notation. 
Observed data consist of a  sample 
$\{(Y_i,D_i,\bm{Z}_i): i=1,\ldots, N\}$ 
of $(Y,D,\bm{Z})$, where $N$ is the sample size.  We  denote the generalized propensity scores by
 $P_k(\bm{Z}) := \Pr(D = k\vert\bm{Z});$ they are directly identified from the data. Our models of treatment assignment  rely on functions of the instruments $Q_j\parZ$ that are a priori unknown to the econometrician and will need to be identified. We  also introduce  random vectors $\bm{V}$ to represent unobserved heterogeneity.

	Let $G$ denote a function defined on the  support $\mathcal{Y}$ of $Y$,  which can be discrete, continuous, or multidimensional. We focus on identification of the conditional counterfactual expectations $E\left(G(Y_k)\vert \bm{V}=\bm{v}\right)$ and on measures of treatment effects that can be derived from them.
	For example,  a possible object of interest is the marginal treatment effect (MTE), defined as $E\left(Y_k-Y_l\vert \bm{V}=\bm{v}\right)$. This is similar to the MTE in the binary treatment model, in that it conditions on the value of unobserved heterogeneity in treatment. One important difference is that the link between the unobserved heterogeneity vector $\bm{V}$ and the generalized propensity scores $\Pr(D=k\vert \bm{Z})$ is now more indirect. 
	
	Aggregating up would give the 
	mean of the counterfactual outcome $G(Y_k)$ (conditional on the omitted covariates $\bm{X}$). Once we identify $E G(Y_k)$ for each $k$, we also identify the average treatment effect  $E(G(Y_k) -G(Y_j))$ between any two treatments $k$ and $j$. 
	Alternatively, if we let 
	$G(Y_k) = \uniset(Y_k \leq y)$ for some $y$, where $\uniset(\cdot)$ is the usual indicator function, then the object of interest is the marginal distribution of  $Y_k$. This leads to the identification of quantile treatment effects.

One of our aims is to relax the usual monotonicity assumption that underlies the LATE and LIV estimators. Consider the following, simple example where $K=3$, and treatment assignment is driven by a pair of random variables $V_1$ and $V_2$ whose marginal distributions are normalized to be $U[0,1].$

\begin{example}[Selection with Two-Way Flows]\label{case:example1}
Assume that there are two thresholds $Q_1(\bm{Z})$ and $Q_2(\bm{Z})$ such that
\begin{itemize}
	\item $D=0$  iff $V_1<Q_1(\bm{Z})$ and $V_2<Q_2(\bm{Z})$,
	\item $D=1$ iff $V_1>Q_1(\bm{Z})$ and $V_2>Q_2(\bm{Z})$,
	\item $D=2$  iff $(V_1-Q_1(\bm{Z}))$ and $(V_2-Q_2(\bm{Z}))$ have opposite signs.
\end{itemize}
We could interpret $Q_1$ and $Q_2$ as minimum  grades or scores in a two-part exam or  an eligibility test based on two criteria: failing both parts/criteria  assigns you to $D=0$, passing both to $D=1$, and failing only one to $D=2$. 

If $F$ is the joint cdf of $(V_1,V_2)$, it follows that the generalized propensity scores are
\label{page:defEx1}
\begin{align}\label{two-way-flow-propensity-score}
\begin{split}
P_0\parZ &=F\left(Q_1\parZ,Q_2\parZ\right), \\
P_1\parZ &=1-Q_1\parZ-Q_2\parZ+F\left(Q_1\parZ,Q_2\parZ\right), \\
P_2\parZ &=	Q_1\parZ+Q_2\parZ-2F\left(Q_1\parZ,Q_2\parZ\right).
\end{split}
\end{align}
Take a  change in the values of the  instruments that increases both $Q_1\parZ$ and $Q_2\parZ$: both criteria, or both parts of the exam, become more demanding.
 Figure~\ref{fig:ex1} plots this change in $(V_1,V_2)$ space. 
 The black square represents the initial marginal observation, with $V_1=Q_1\parZ$ and $V_2=Q_2\parZ$; and the red circle at the other end of the arrow is the new marginal observation. In both cases, the quadrants delimited by the axes that intersect at the marginal observation define treatment cells. 
Observations in region (A)  move from $D=1$ to $D=2$, those in region (B)  move from $D=1$ to $D=0$, and those in regions (C)  move from $D=2$ to $D=0.$ This violates monotonicity, and even the weaker assumption that generalized propensity scores are monotonic in the instruments.   Note also that  observations  in region (C) leave $D=2$, while those in region (A) move into  $D=2$:  there are {\em two-way flows\/} in and out of $D=2$. Moreover, it is easy to see that {\em any\/} change in the thresholds creates such two-way flows; Figure~\ref{fig:ex1bis}  illustrates it for changes in opposite directions, with observations  in region (E) moving from $D=0$ to $D=2$,  observations (F) moving from $D=2$ to $D=1$, observations (G) moving from $D=1$ to $D=2$,
and observations (H)  moving from $D=2$ to $D=0$.

Therefore this model   violates the  weaker requirement of unordered monotonicity of \cite{heckmanpinto-pdt}, which we describe in Section~\ref{sub:unordered_monotonicity}---unless we are only interested in treatment values 0 and 1. $\qed$
\end{example}

\begin{figure}[h]
\begin{tikzpicture} 
\draw[->,>=latex] (-5,0)  -- (5,0) node[below] {$V_1$};
\draw[->,>=latex] (0,-5)  -- (0,5) node[above] {$V_2$};
\draw[dashed] (-5,-3) -- (5,-3);
\draw[dashed] (-2,-5) -- (-2,5);
\draw[dashed,red] (-5,-1) -- (5,-1);
\draw[dashed,red] (-1,-5) -- (-1,5);
\draw[->,>=latex] (-1.8,-2.8) -- (-1.2,-1.2);
	\draw[fill=black] (-2.1,-3.1) rectangle(-1.9,-2.9);
	\draw[fill=red] (-1,-1) circle(0.1);
\node[blue] at (-4,-4) {$D=0$};
\node[blue] at (1,1) {$D=1$};
\node[blue] at (-4,1) {$D=2$};
\node[blue] at (1,-4) {$D=2$};
\node[black] at (-1.6,-1.4) {(B)};
\node[black] at (2,-1.5) {(A)};
\node[black] at (-1.5,2) {(A)};
\node[black] at (-0.5,-1.5) {(A)};
\node[black] at (-1.5,-0.5) {(A)};
\node[black] at (-4,-1.5) {(C)};
\node[black] at (-1.5,-4) {(C)};
\end{tikzpicture} 
\caption{Example 1}
\label{fig:ex1}
\end{figure}
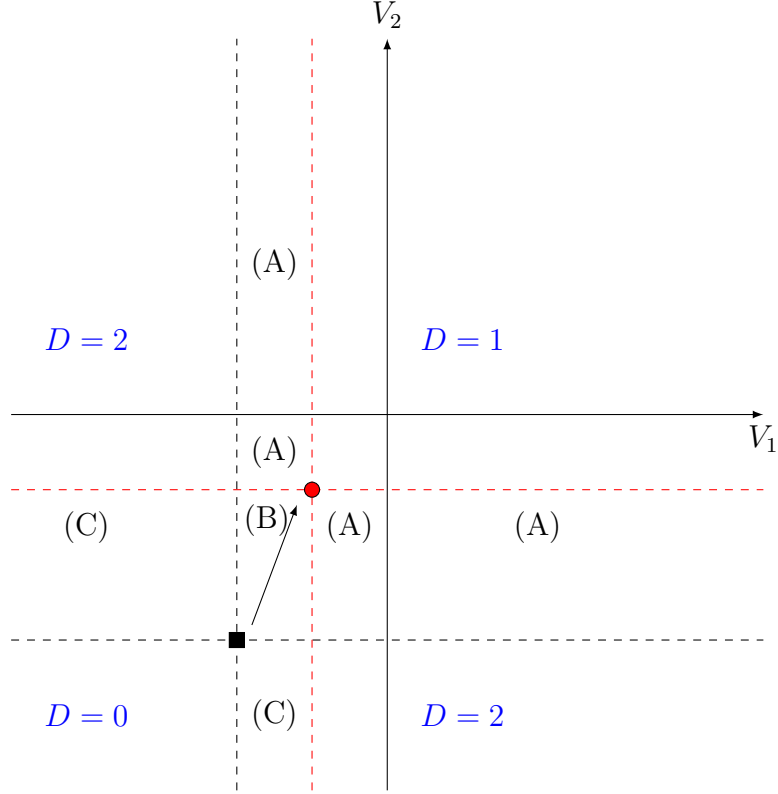

\begin{figure}[h]
\begin{tikzpicture} 
\draw[->,>=latex] (-5,0)  -- (5,0) node[below] {$V_1$};
\draw[->,>=latex] (0,-5)  -- (0,5) node[above] {$V_2$};
\draw[dashed, red] (-5,-3) -- (5,-3);
\draw[dashed] (-2,-5) -- (-2,5);
\draw[dashed] (-5,-1) -- (5,-1);
\draw[dashed,red] (-1,-5) -- (-1,5);
\draw[->,>=latex] (-1.8,-1.2) -- (-1.2,-2.8);
	\draw[fill=black] (-2.1,-1.1) rectangle(-1.9,-0.9);
	\draw[fill=red] (-1,-3) circle(0.1);
\node[blue] at (-4,-4) {$D=0$};
\node[blue] at (1,1) {$D=1$};
\node[blue] at (-4,1) {$D=2$};
\node[blue] at (1,-4) {$D=2$};
\node[black] at (2,-1.5) {(F)};
\node[black] at (-4,-1.5) {(E)};
\node[black] at (-1.5,2) {(G)};
\node[black] at (-0.5,-1.5) {(F)};
\node[black] at (-1.5,-0.5) {(G)};
\node[black] at (-1.5,-4) {(H)};
\end{tikzpicture} 
\caption{Example 1 (continued)}
\label{fig:ex1bis}
\end{figure}
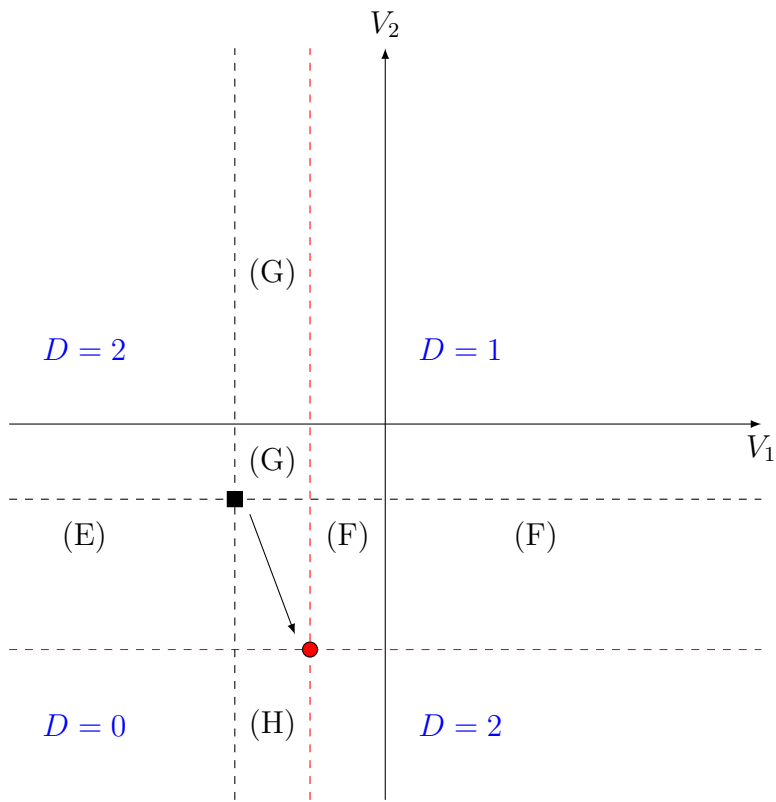

\medskip

To take a slightly more complicated example, consider the following entry game.

\begin{example}[Entry Game]\label{case:example2}
Two firms $j=1,2$ are considering entry into a new market. Firm $j$ has profit $\pi_j^m$ if it becomes a monopoly, and $\pi^d_j<\pi^m_j$ if both firms enter. The static Nash equilibria are simple:

\begin{itemize}
	\item if for both firms $\pi_j^m<0$, then no firm enters;
		\item if  $\pi_j^m>0$ and $\pi_k^m<0$, then only firm $j$ enters;
	\item if  for both firms $\pi_j^d>0$, then both firms enter;
	\item if  $\pi_j^d>0$ and $\pi_k^d<0$, then only firm $j$  enters;
	\item if  $\pi_j^m>0>\pi_j^d$ for both firms, then there are two symmetric equilibria, with only one firm operating.
\end{itemize}
Now let $\pi_j^m=V_j-Q_j(\bm{Z})$ and $\pi_j^d=\bar{V}_j-\bar{Q}_j(\bm{Z})$, and suppose we only observe the number $D=0,1,2$ of entrants. Then
\begin{itemize}
	\item $D=0$ iff $V_1<Q_1(\bm{Z})$ and $V_2<Q_2(\bm{Z})$
		\item $D=2$ iff $\bar{V}_1>\bar{Q}_1(\bm{Z})$ and $\bar{V}_2>\bar{Q}_2(\bm{Z})$
		\item $D=1$ otherwise. 
\end{itemize}
This is very similar to the structure of Example~\ref{case:example1}; in fact it coincides with it in the degenerate case when for each firm, $\pi^j_m$ and $\pi^j_d$ have the same sign with probability one\footnote{If the econometrician observes the identity of the entrants and not only their numbers, we face the usual partial identification problem generated by the existence of multiple equilibria \citep[see e.g.][]{Tamer2003}. If equilibrium selection is modeled as an additional threshold-crossing rule, then our approach actually encompasses this case. We refer the reader to Online Appendix \ref{sec:the_entry_game}, where we explain this in more detail.}. $\qed$
\end{example}


\subsection{The Selection Mechanism} 
\label{sub:selmech}

These two examples motivate the weak assumption we impose on the  underlying selection mechanism. 
In the following we use $\bm{J}$ to denote the set $\{1,\ldots,J\}.$

\begin{assumption}[Selection Mechanism]\label{assumption:selection}
There exist a finite number $J$, a vector of unobserved random variables $\bm{V} := \{V_j:  j \in \bm{J} \}$, and a vector of  
known 
functions 
$\{\bm{Q}_j\parZ:  j \in \bm{J} \}$ such that any of the following three  equivalent statements holds:
\begin{itemize}
\item[(i)]  the treatment variable $D$ is measurable with respect to the $\sigma$-field generated by the events 
\[
E_j\left(\bm{V},\bm{Q}(\bm{Z})\right):=\left\{V_j <Q_j(\bm{Z})\right\} \; \mbox{for} \;   j \in \bm{J};
\]
\item[(ii)] each event   $\{D=k\}=\{D_k=1\}$ is a member of this $\sigma$-field;
\item[(iii)] for each $k$, there exists a function $d_k$ that is measurable with respect to this $\sigma$-field such that $D_k=d_k(\bm{V},\bm{Q}(\bm{Z}))$.
\end{itemize}
Moreover, every treatment value $k$ has positive probability.
\end{assumption}

The threshold conditions in Assumption~\ref{assumption:selection} have the ``rectangular'' form $V_j<Q_j(\bm{Z})$. 
Appendix \ref{sec:nonrect} discusses a more general form of linear inequalities $\bm{\beta}_j\cdot \bm{V} <Q_j\parZ.$
Note that  the fact that every observation belongs to one and only one treatment group  imposes further constraints. We defer discussion of these constraints to section~\ref{sec:apps}, where we show how they can be used for overidentification tests.

  In this notation, the validity of the instruments translates into:

 \begin{assumption}[Conditional Independence of Instruments]
 	\label{ass:condindZ}
$Y_k$ and $\bm{V}$ are   jointly independent of $\bm{Z}$ for each $k=0,\ldots,K-1$.
 \end{assumption}

 To describe the class of selection mechanisms defined in Assumption \ref{assumption:selection} more concretely, we focus on a treatment value $k$. We  define   $S_j(\bm{V},\bm{Q}(\bm{Z})):=\uniset(V_j <Q_j(\bm{Z}))$ for 
  $j=1,\ldots,J$. The $\sigma$-field generated by  $\{E_j(\bm{V},\bm{Q}(\bm{Z})): j=1,\ldots,J\}$  is obtained by taking unions,  intersections, and complements of these $E_j$ sets. These three operations correspond to  taking sums, products, and differences of their indicator  functions $S_j$.
  Therefore the function $d_k$ referred to in Assumption~\ref{assumption:selection}.(iii) can be written as an algebraic sum of products of the $S_j$ indicator functions. Let $\mathcal{L}$ denote  the set of  all  subsets  $l=\left\{l_1,\ldots,l_{\abs{l}}\right\}$ of $\bm{J}$.  Then 
\begin{equation}
	\label{eq:dkpol}
d_k(\bm{V},\bm{Q}(\bm{Z}))=\sum_{l \in \mathcal{L}}  c_l^k \prod_{j\in l} S_{j}(\bm{V},\bm{Q}(\bm{Z})) =\sum_{l \in \mathcal{L}}  c_l^k \prod_{m=1}^{\abs{l}} S_{l_m}(\bm{V},\bm{Q}(\bm{Z}))
\end{equation}
where   the $c^k_l$ are algebraic integers. Moreover, this
 decomposition is unique.

Since $d_k(\bm{V},\bm{Q}(\bm{Z}))$ depends on 
	$\bm{V}$ and $\bm{Q}(\bm{Z})$ only through $\bm{S} := 
	\{S_{j}(\bm{V},\bm{Q}(\bm{Z})): j \in \bm{J} \}$, it will sometimes be convenient to express $d_k$ as a function of $\bm{S}$, which we denote $\mathcal{D}_k(\bm{S}).$
	For example, if $J=2$,  we have $\mathcal{D}_k(\bm{S}) = c^k_{\emptyset}+c^k_{\{1\}} S_1+ c^k_{\{2\}} S_2 + c^k_{\{1,2\}} S_1 S_2$ for some algebraic integers $c^k_{\emptyset},c^k_{\{1\}}, c^k_{\{2\}},$ and $c^k_{\{1,2\}}$.

To illustrate this, let us return to Example \ref{case:example1},  with $J=2$ and $K=3$.
For $k=0$, the selection mechanism is described by the intersection $E_1\cap E_2$, whose indicator function is 
$\mathcal{D}_0(\bm{S}) = S_1 S_2$.
Similarly,  for $k=1$ we find $\mathcal{D}_1(\bm{S})=(1-S_1) (1-S_2)$. Finally,   for $k=2$ we have 
 \[
 \mathcal{D}_2(\bm{S})=S_1(1-S_2)+(1-S_1)S_2=S_1+S_2-2S_1 S_2.
 \]

 It is useful to think of the products in~\eqref{eq:dkpol}  as  alternatives in a discrete choice model. For instance, $(1-S_1)S_2$ could be interpreted as ``item 1'' having negative value and ``item 2'' having positive value. In Example~\ref{case:example1},  $D_2=1$ informs us that the values of item 1 and of item 2 have opposite signs.  In essence, we are dealing with discrete choice models with only partially observed choices. This analogy will prove useful.

\subsection{Indices and Degrees} 
\label{sub:indices_and_degrees}

The term $l=\left\{1,\ldots,J\right\}=\bm{J}$, which corresponds to the product of all $J$ indicator functions $S_j$ in~\eqref{eq:dkpol},  plays an important role in our analysis. We will call its $c_l$ the {\em index of the treatment}.

\medskip

\begin{definition}\label{def:index}
	Take a treatment value $k$ in a treatment model with $J$ thresholds. We call the coefficient $c^k_{\bm{J}}$ in~\eqref{eq:dkpol} the index of treatment $k$.
\end{definition}

\medskip

In Example~\ref{case:example1}, the highest order term has a coefficient $c^k_{\{1,2\}}=-1$. With $J=2$ as in Example~\ref{case:example1},  the only  treatments with  a zero  index are those which  depend on only one threshold: e.g. $\uniset(V_1<Q_1)$. But with three or more thresholds ($J>2$), it is 
not hard to generate cases in which a treatment value $k$ depends on all $J$ thresholds and still has zero index, as shown in Example~\ref{case:example3}.

\begin{example}[Zero Index]\label{case:example3}
Assume  that $J=K=3$ 
 and take treatment $0$ such that 
\begin{align*}
D_0 &=\uniset(V_1<Q_1\parZ, V_2<Q_2\parZ, V_3<Q_3\parZ)\\
&+	\uniset(V_1>Q_1\parZ, V_2>Q_2\parZ, V_3>Q_3\parZ).
\end{align*}
Then the indicator function for $\{D_0=1\}$ is
\[
d_0=S_1S_2S_3+(1-S_1)(1-S_2)(1-S_3) =
1-S_1-S_2-S_3+S_1 S_2+S_1S_3+S_2S_3,
\]
which has no degree three term. $\qed$
\end{example}

When the  index is zero as in Example~\ref{case:example3}, the indicator function of the corresponding treatment $k$ has degree strictly smaller than $J$. Since Assumption~\ref{assumption:selection} rules out the uninteresting cases when treatment $k$ occurs with  probability zero or one, its indicator function cannot be constant; and its leading terms have degree $m\geq 1$.  We call $m$ the {\em degree\/} of treatment $k$. In Example~\ref{case:example3}, treatment value 0 has index 0 and degree 2.

The following lemma summarizes the discussion in Sections \ref{sub:selmech} and \ref{sub:indices_and_degrees}:
 
 \begin{lemma}\label{lem:pikn}
 	Under Assumption~\ref{assumption:selection}, for each $k\in\mathcal{K}$ there exists  a  unique family of algebraic integers $(c^k_l)$  such that 
	\[
d_k(\bm{V},\bm{Q}(\bm{Z})) =\sum_{l \in \mathcal{L}}  c^k_l \prod_{j\in l} S_{j}(\bm{V},\bm{Q}(\bm{Z}))
\]
	  where 
$\mathcal{L}$ is  the set of  all  subsets  $l=\left\{l_1,\ldots,l_{\abs{l}}\right\}$ of $\bm{J}$.
	  
 	The leading terms of the multivariate polynomial $\mathcal{D}_k(\bm{S})$ 	 have degree $1\leq m\leq J$, which we also call the {\em degree\/} of treatment $k$.
	 
	\begin{itemize}
		\item If $m=J$, then the leading term  in $\mathcal{D}_k(\bm{S})$ is 
	 \[
	   c^k_{\bm{J}} \prod_{j=1}^J S_j.
	 \]
	 \item if $m<J$, then $c^k_{\bm{J}}=0$.
	\end{itemize} 
	We call $c^k_{\bm{J}}$ the \emph{index} of treatment $k$.
 \end{lemma}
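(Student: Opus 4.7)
The plan is to recognise that $\mathcal{D}_k$ is a $\{0,1\}$-valued function on the Boolean cube $\{0,1\}^J$, and then to invoke the standard fact that any such function admits a unique representation as a multilinear polynomial in its arguments with integer coefficients. Every claim in the lemma follows from this.

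First I would set up the atomic decomposition. The $\sigma$-field generated by $\{E_j(\bm{V},\bm{Q}(\bm{Z})):j\in\bm{J}\}$ has at most $2^J$ atoms indexed by $\bm{s}\in\{0,1\}^J$; the atom labelled by $\bm{s}$ is the event $\{S_j=s_j\text{ for all }j\in\bm{J}\}$, with indicator $\prod_{j:s_j=1}S_j\prod_{j:s_j=0}(1-S_j)$. By Assumption~\ref{assumption:selection}.(iii), $d_k$ is constant on each atom, taking some value $\mathcal{D}_k(\bm{s})\in\{0,1\}$. Summing the atomic indicators weighted by $\mathcal{D}_k(\bm{s})$ and expanding every factor $(1-S_j)$ by distributivity---using $S_j^2=S_j$, so that no higher power of any $S_j$ ever arises---delivers the representation $d_k=\sum_{l\in\mathcal{L}}c^k_l\prod_{j\in l}S_j$ with $c^k_l\in\mathbb{Z}$; this settles both existence and integrality at once.

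For uniqueness I would argue by linear algebra (equivalently, Möbius inversion on the Boolean lattice). The $2^J$ monomials $\{\prod_{j\in l}S_j:l\in\mathcal{L}\}$, viewed as functions on $\{0,1\}^J$, are linearly independent (the evaluation matrix is upper triangular with $1$'s on the diagonal once subsets are ordered by inclusion), and so form a basis of the $2^J$-dimensional space of real-valued functions on the cube. Möbius inversion then gives the explicit formula
\[
c^k_l=\sum_{l'\subseteq l}(-1)^{|l|-|l'|}\mathcal{D}_k(\bm{e}_{l'}),
\]
where $\bm{e}_{l'}\in\{0,1\}^J$ is the indicator vector of $l'$. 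This pins down each $c^k_l$ from the values of $\mathcal{D}_k$ on the cube and simultaneously recertifies that every $c^k_l$ is an integer. The only (minor) obstacle is conceptual: uniqueness is uniqueness of the polynomial \emph{function} $\mathcal{D}_k$ on $\{0,1\}^J$ rather than of the random variable $d_k$, but this is the correct formalisation because $\bm{S}$ takes values exactly in the cube and $\mathcal{D}_k$ is defined pointwise there.

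The degree and index statements now follow immediately. Under Assumption~\ref{assumption:selection}, every treatment value has positive probability and $|\mathcal{K}|\ge 2$, so $\mathcal{D}_k$ is neither identically $0$ nor identically $1$ and its multilinear degree $m$ satisfies $1\le m\le J$. When $m=J$ there is a unique $l\in\mathcal{L}$ with $|l|=J$, namely $l=\bm{J}$, and the leading term of $\mathcal{D}_k(\bm{S})$ is precisely $c^k_{\bm{J}}\prod_{j=1}^{J}S_j$; when $m<J$, every degree-$J$ monomial is absent from $\mathcal{D}_k$, and by the uniqueness just established this forces $c^k_{\bm{J}}=0$. I expect the whole proof to be essentially bookkeeping, with the lone point of substance being the linear-independence/Möbius step that supports uniqueness.
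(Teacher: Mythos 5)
Your proof is correct and follows essentially the same route as the paper, which represents $d_k$ as a multilinear polynomial in the indicators $S_j$ via the structure of the $\sigma$-field generated by the events $E_j$ and then reads off the degree and index statements from non-degeneracy of the treatment probabilities. The one place you go beyond the paper is welcome rather than divergent: the paper merely asserts that the decomposition is unique, whereas your atomic decomposition plus the linear-independence/M\"obius-inversion argument on $\{0,1\}^J$ actually establishes it (and your remark that uniqueness pertains to the function $\mathcal{D}_k$ on the cube, not to the random variable $d_k$, is exactly the right formalisation).
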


\section{Identification Results}\label{sec:general-results}

In this section we fix $\bm{x}$ in the support of $\bm{X}$ and we suppress it from the notation.
All the results obtained below are local to this choice of $\bm{x}$.
Global (unconditional) identification results follow immediately if our assumptions hold for almost every $\bm{x}$ in the support of $\bm{X}$.

 We only  treat  the non-zero index in the text. We make this  explicit in the following assumption.

 \begin{assumption}[Non-zero index]\label{ass:index}
 The index $c^k_{\bm{J}}$ defined in Lemma \ref{lem:pikn} is nonzero. 
 \end{assumption}

 We   analyze zero-index treatments in  Appendix~\ref{sub:index-zero-case}.
 
 \medskip

We require that $\bm{V}$ have full support:

\begin{assumption}[Continuously Distributed Unobserved Heterogeneity in the Selection Mechanism]
	\label{ass:continuous}
The joint distribution of $\bm{V}$   is absolutely continuous with respect to the Lebesgue measure on $\mathbb{R}^{J}$ and its support is $[0,1]^{J}$. 
\end{assumption}




Note that when $J=1$, Assumptions \ref{assumption:selection}  and \ref{ass:continuous} define the usual threshold-crossing model that underlies the LATE and LIV approaches. However, our assumptions  allow for a much richer class of selection mechanisms when $J>1$.  Our Example~\ref{case:example1} illustrates that  our ``multiple thresholds model'' does not impose any multidimensional extension of the monotonicity condition that is implicit with a single threshold model. Even when $K=2$, so that treatment is binary, $J$ could be larger than one. This would  allow for flexible treatment assignment: just modify Example~\ref{case:example1} to obtain the double hurdle model
\[
D=\uniset\left(V_1<Q_1(\bm{Z}) \mbox{ and }  V_2<Q_2(\bm{Z})\right).
\]

Let  
$f_{\bm{V}}(\bm{v})$ denote the joint density function of $\bm{V}$ at $\bm{v} \in [0,1]^{J}$. 
Our identification argument relies on continuous instruments that generate enough variation in the thresholds. This motivates the following three assumptions.

For any function $\psi$ of $\bm{q}$, define ``local equicontinuity at $\bm{\overline{q}}$'' by the following property:
for any subset $I\subset \bm{J}$, the family of functions $\bm{q}_I \mapsto \psi(\bm{q}_I, \bm{q}_{-I})$ indexed by $\bm{q}_{-I} \in [0,1]^{\abs{\bm{J}-I}}$ is equicontinuous in a neighborhood of $\bm{\overline{q}}_I.$
 
\begin{assumption}[Local equicontinuity at $\bm{q}$]\label{ass:Q-continuity}
The functions $\bm{v} \mapsto f_{\bm{V}}(\bm{v})$ and
 $\bm{v} \mapsto  E\left(G(Y_k)\vert \bm{V}=\bm{v}\right)$
 are locally equicontinuous at $\bm{v}=\bm{q}$.
\end{assumption}

 Assumption \ref{ass:Q-continuity}    allows us to differentiate the relevant expectation terms.  It is fairly weak:  Lipschitz-continuity for instance implies local equicontinuity\footnote{It would be easy to adapt our results to cases where, for instance, $\bm{Q}$ has discontinuities. We do not pursue it in this paper.}.

\begin{definition}
	Let $\mathcal{Z}$ denote the support of $\bm{Z}$; and $\mathcal{Q}=\bm{Q}(\mathcal{Z})$ the range of variation of $\bm{Q}\parZ$.
\end{definition}

 The next  two assumptions apply to the functions $\bm{Q}(\bm{Z})$  and in particular to their range of variation over the support $\mathcal{Z}$ of $\bm{Z}$.  The functions $\bm{Q}$ are unknown in most cases, and need to be identified;  in this part of the paper we assume that they are known. We will return to identification of the $\bm{Q}$ functions in Section~\ref{sec:identQ}.

\begin{assumption}[Open Range at $\bm{q}$]\label{ass:open} 
The point $\bm{q}$ belongs to the interior of  the range of variation of the thresholds $\mathcal{Q}$.
\end{assumption}

Assumption~\ref{ass:open} ensures that we can generate any small variation in $\bm{Q}(\bm{Z})$  around $\bm{q}$ by varying the instruments around $\bm{z}.$ 
This makes 
the instruments strong enough to deal with multidimensional unobserved heterogeneity $\bm{V}$.  

With $J$ thresholds, Assumption~\ref{ass:open} requires that $\mathcal{Q}$ contains a $J$-dimensional neighborhood of $\bm{q}$. This in turn can only happen (given Assumption~\ref{ass:Q-continuity}) if the range of variation of the instruments $\mathcal{Z}$ contains an open subset of $\Real^J.$ 
Having $J$-dimensional continuous variation in the instruments is crucial to our approach.   

\medskip

For some corollaries, we  use a global version of  Assumptions~\ref{ass:Q-continuity} and \ref{ass:open}.  To state it formally, we need one last definition.

\begin{definition}
	Let $\mathcal{\tilde{Q}}\subset \mathcal{Q}$ denote the set of values $\bm{q}$ where Assumptions~\ref{ass:Q-continuity} and~\ref{ass:open} both hold.
\end{definition}

\begin{assumption}[Global Condition]	\label{ass:global}
$\mathcal{\tilde{Q}}$ contains $(0,1)^J$.
\end{assumption}
 
Assumption \ref{ass:global}  requires  both that the variation in the instruments generate all possible values of the $J$ thresholds and that Assumptions~\ref{ass:Q-continuity} and~\ref{ass:open} hold everywhere. 
We do not need this rather stringent assumption to identify the marginal treatment effects;  but it is useful to derive various  parameters of interest that aggregate the marginal treatment effects.

\subsection{Identification with a Non-Zero Index} 
\label{sub:identification_with_a_non_zero_index}

 We are now ready to prove identification of $E\left(G(Y_k)\vert \bm{V}=\bm{q}\right)$ when treatment $k$ has a non-zero index.   In the following theorem,  for any real-valued function $\bm{q}\mapsto h(\bm{q})$, the notation
\[
T h(\bm{q})\equiv \frac{\partial^J h}{\prod_{j=1}^J\partial q_j}(\bm{q})
\]
refers to the $J$-order derivative that obtains by taking derivatives of the function $h$ at $\bm{q}$ in each direction of $\bm{J}$, when this derivative exists.

\begin{theorem}[Identification with a non-zero index]\label{thm:iden}
Let Assumptions \ref{assumption:selection}, \ref{ass:condindZ}, \ref{ass:index}, and \ref{ass:continuous}   hold. Fix a value  $\bm{q}$ where Assumptions~\ref{ass:Q-continuity} and \ref{ass:open} hold; that is, $\bm{q}\in\mathcal{\tilde{Q}}$.  Then	 the density  of $\bm{V}$ and  the conditional expectation of $G(Y_k)$ are given by\footnote{The proof of the theorem shows that these derivatives are well-defined.} 
\begin{align*}
f_{\bm{V}}(\bm{q}) 
&= \frac{1}{c^k_{\bm{J}}}
T \Pr(D=k\vert\bm{Q}\parZ =\bm{q}) \\[5mm]
 E[G(Y_k) \vert \bm{V} =\bm{q}] 
&= 
\frac{T E\left(G(Y)D_k\vert\bm{Q}\parZ = \bm{q}\right)}
{T \Pr(D=k\vert\bm{Q}\parZ = \bm{q})}.
\end{align*}
\end{theorem}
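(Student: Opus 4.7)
The plan is to expand $d_k$ using the polynomial decomposition from Lemma \ref{lem:pikn}, take conditional expectations, and observe that the $J$-th order mixed partial operator $T$ annihilates every term except the one of degree $J$. Concretely, start from
\[
\Pr(D=k\vert\bm{Q}\parZ = \bm{q}) = E\bigl[d_k(\bm{V},\bm{q})\bigr] = \sum_{l\in\mathcal{L}} c^k_l\, E\Bigl[\prod_{j\in l}\uniset(V_j<q_j)\Bigr] = \sum_{l\in\mathcal{L}} c^k_l\, F_{\bm{V}_l}(\bm{q}_l),
\]
where $F_{\bm{V}_l}$ is the marginal CDF of $(V_j)_{j\in l}$. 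The first equality uses Assumption~\ref{ass:condindZ} (independence of $\bm{V}$ and $\bm{Z}$) together with the fact that conditioning on $\bm{Q}\parZ=\bm{q}$ replaces $\bm{Q}\parZ$ by $\bm{q}$ inside $d_k$. Assumption~\ref{ass:open} guarantees that $\bm{q}$ lies in the interior of $\mathcal{Q}$, so the $J$-fold derivative at $\bm{q}$ makes sense as a genuine interior derivative obtained by varying $\bm{Z}$.

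Next, I would argue term by term. For any $l\subsetneq \bm{J}$, pick $j\in\bm{J}\setminus l$; the function $F_{\bm{V}_l}(\bm{q}_l)$ does not depend on $q_j$, so $\partial_{q_j}F_{\bm{V}_l}(\bm{q}_l)=0$ and hence $TF_{\bm{V}_l}(\bm{q}_l)=0$. Only the $l=\bm{J}$ term survives, and by Assumption~\ref{ass:continuous} together with the local equicontinuity of $f_{\bm{V}}$ at $\bm{q}$ in Assumption~\ref{ass:Q-continuity}, the mixed partial of the joint CDF recovers the density: $TF_{\bm{V}}(\bm{q})=f_{\bm{V}}(\bm{q})$. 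Combining yields $T\Pr(D=k\vert\bm{Q}\parZ=\bm{q}) = c^k_{\bm{J}}\,f_{\bm{V}}(\bm{q})$, and Assumption~\ref{ass:index} lets us divide by $c^k_{\bm{J}}$ to obtain the first display.

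For the second display, apply exactly the same expansion to $E\left(G(Y)D_k\vert \bm{Q}\parZ=\bm{q}\right)$. Since $G(Y)D_k = G(Y_k)d_k(\bm{V},\bm{Q}\parZ)$, conditional independence gives
\[
E\!\left(G(Y)D_k\vert \bm{Q}\parZ=\bm{q}\right) = \sum_{l\in\mathcal{L}} c^k_l\int \!\cdots\! \int_{\{v_j<q_j,\,j\in l\}} \!\!\! E\bigl[G(Y_k)\vert\bm{V}_l=\bm{v}_l\bigr] f_{\bm{V}_l}(\bm{v}_l)\,d\bm{v}_l.
\]
Again each $l\subsetneq \bm{J}$ term does not depend on $q_j$ for some $j\notin l$, so it is killed by $T$. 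For $l=\bm{J}$, the integrand is $h_k(\bm{v}):=E[G(Y_k)\vert\bm{V}=\bm{v}]f_{\bm{V}}(\bm{v})$, which is locally equicontinuous at $\bm{q}$ by Assumption~\ref{ass:Q-continuity}; hence an iterated application of the fundamental theorem of calculus gives $T\!\int_{\bm{v}\leq\bm{q}}h_k(\bm{v})d\bm{v} = h_k(\bm{q})$. Therefore $TE\!\left(G(Y)D_k\vert\bm{Q}\parZ=\bm{q}\right)=c^k_{\bm{J}}\,E[G(Y_k)\vert\bm{V}=\bm{q}]\,f_{\bm{V}}(\bm{q})$, and dividing by the expression obtained in the first step yields the ratio formula.

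The main obstacle is justifying that $T$ commutes with the sum and with the integrals, i.e.\ that the $J$-th order mixed partial of each multivariate CDF/integral equals the pointwise integrand at $\bm{q}$. This is exactly what local equicontinuity in Assumption~\ref{ass:Q-continuity} is designed to deliver: it allows the iterated Leibniz/fundamental-theorem argument in each coordinate direction in a neighborhood of $\bm{q}$, ensuring the mixed partials exist and are continuous. A secondary point is that in the ratio formula, the denominator $c^k_{\bm{J}}f_{\bm{V}}(\bm{q})$ must be nonzero, which is ensured by Assumption~\ref{ass:index} and by $\bm{q}$ being in the interior of the support of $\bm{V}$ (which follows from Assumption~\ref{ass:open} plus $\mathcal{Q}\subset[0,1]^J$); strictly speaking $E[G(Y_k)\vert\bm{V}=\bm{q}]$ is only meaningful on $\{f_{\bm{V}}>0\}$, and the second display inherits this qualification.
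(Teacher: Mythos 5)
Your proposal is correct and follows essentially the same route as the paper: expand $d_k$ via the polynomial decomposition of Lemma~\ref{lem:pikn}, use conditional independence to write the conditional moments as sums over subsets $l\subset\bm{J}$, note that every term with $l\subsetneq\bm{J}$ is constant in some direction $q_j$ and is therefore annihilated by $T$, and use local equicontinuity (Assumption~\ref{ass:Q-continuity}) plus the interiority of $\bm{q}$ (Assumption~\ref{ass:open}) to justify the iterated differentiation of the surviving $l=\bm{J}$ term. The paper merely spells out the differentiability step with an explicit equicontinuity estimate where you invoke it more briefly, and phrases the terms as integrals of Heaviside products rather than marginal CDFs; these are presentational differences only.
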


\begin{proof}[Proof of Theorem \ref{thm:iden}] 
See section~\ref{sec:proof-iden}.
\end{proof}


For two treatment values  $k$ and $\ell$, define the marginal treatment effect as 
\begin{align}\label{MTE}
\Delta_{\text{MTE}}^{(k, \ell)} (\bm{v}) &:= 
E[G(Y_k) \vert \bm{V} =\bm{v}] - E[G(Y_\ell) \vert \bm{V} =\bm{v}].
\end{align}
The MTE function $\bm{v} \mapsto \Delta_{\text{MTE}}^{(k, \ell)} (\bm{v})$ is the average treatment effect conditional on $\bm{V}=\bm{v}$. Since 
$\bm{V}$ is the vector of unobservables that determine the selection mechanism, the MTE function reveals how treatment effects vary with the unobservables governing selection.  As such, it captures the effect of selection and it allows the analyst to  simulate counterfactual policies.  It follows from Theorem \ref{thm:iden} that if $k$ and $\ell$ are two treatments to which all of our assumptions apply, then we can identify the  marginal treatment effect of moving between these two treatments, as well as the quantile version of this MTE.
We also identify the joint density function $\bm{v} \mapsto f_{\bm{V}} (\bm{v})$, which is an  object of interest since it describes the  dependence among elements of  $\bm{V}$. Appendix~\ref{sub:index-zero-case} extends Theorem~\ref{thm:iden} to zero-index treatment values; it shows that similar formul\ae\ identify marginal treatment effects averaged over the missing threshold rules.
 	
As in \cite{MIV2005}, we can identify various treatment effect parameters using Theorem  \ref{thm:iden}.
The following corollary shows that one can identify the average treatment effect (ATE), the average treatment effect on the treated (ATT), and the policy relevant treatment effect (PRTE) of \cite{heckman2001policy}. 
The PRTE  measures the 
 average effect of moving from a baseline policy to an alternative policy.
To define the PRTE, consider a class of policies that change $\bm{Q}$ but that do not affect $E[G(Y_k) \vert \bm{V} =\bm{v}]$. 
Let  $D_k^\ast$ and $Y^\ast$, respectively, denote the treatment choice indicator and the outcome under a new policy $\bm{Q}^\ast$. 
Define $D^\ast \equiv \sum_{k \in \mathcal{K}} k D_k^\ast$.

\begin{corollary}\label{iden:ATE-ATT}
If Assumption~\ref{ass:global} holds in addition to the conditions assumed in Theorem \ref{thm:iden}, then the average treatment effect (ATE) and the average treatment effect on the treated (ATT) are identified by 
\begin{align}
E [G(Y_k) - G(Y_\ell)] &=  \int   \Delta_{\text{MTE}}^{(k, \ell)} (\bm{v}) \omega_{\text{ATE}}(\bm{v}) d\bm{v}, \label{iden-eq-ATE} \\
 E[G(Y_k) - G(Y_\ell)|D = k]  &= \int  \Delta_{\text{MTE}}^{(k, \ell)} (\bm{v})  \omega^k_{\text{ATT}}(\bm{v})  d\bm{v},	\label{iden-eq-ATT}
\end{align}
where
\begin{align*}
\omega_{\text{ATE}}(\bm{v}) &:=   f_{\bm{V}}(\bm{v}), \\
\omega^k_{\text{ATT}}(\bm{v}) &:= 
\frac{\Pr \left[ d_k(\bm{v},\bm{Q}\parZ)=1 \vert \bm{V}=\bm{v} \right]  f_{\bm{V}}(\bm{v})}{\Pr(D=k)}.
\end{align*}
Furthermore, policy relevant treatment effects (PRTEs) are identified by 
\begin{align*}
E[G(Y^\ast)]  -  E[G(Y)] 
&=  \sum_{k \in \mathcal{K}} \int 
\Upsilon_{k}(\bm{v},\bm{Q^\ast},\bm{Q})
E[G(Y_k) \vert \bm{V} =\bm{v}] f_{\bm{V}}(\bm{v}) 
d\bm{v}, \\
E[D^\ast] - E[D]
&=  \sum_{k \in \mathcal{K}} k \int 
\Upsilon_{k}(\bm{v},\bm{Q^\ast},\bm{Q})
f_{\bm{V}}(\bm{v}) 
d\bm{v}, \\
E[D^\ast_k = 1] - E[D_k = 1]
&=   \int 
\Upsilon_{k}(\bm{v},\bm{Q^\ast},\bm{Q})
f_{\bm{V}}(\bm{v}) 
d\bm{v},
\end{align*}
where
\begin{align*}
\Upsilon_{k}(\bm{v},\bm{Q^\ast},\bm{Q}) := \Pr[ d_k(\bm{v},\bm{Q^\ast}\parZ) =1 \vert \bm{V}=\bm{v}]  
- \Pr[ d_k(\bm{v},\bm{Q}\parZ) =1 \vert \bm{V}=\bm{v}].
\end{align*}
\end{corollary}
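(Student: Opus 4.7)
All three identification results should follow from combining the tower property of conditional expectations, the joint independence in Assumption~\ref{ass:condindZ}, and the pointwise identification already delivered by Theorem~\ref{thm:iden}. The role of the global condition (Assumption~\ref{ass:global}) is to lift that pointwise identification to a set of full Lebesgue measure in $[0,1]^J$, namely $(0,1)^J$, which by Assumption~\ref{ass:continuous} is the support of $\bm{V}$ up to a null set. This makes every integral on the right-hand side of the three displayed formulas a genuine identified object.

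For the ATE, I would simply condition on $\bm{V}$ to obtain
\[
E[G(Y_k)-G(Y_\ell)] \;=\; \int_{[0,1]^J} \Delta_{\text{MTE}}^{(k,\ell)}(\bm{v})\, f_{\bm{V}}(\bm{v})\, d\bm{v},
\]
and invoke Theorem~\ref{thm:iden} to identify both factors at every $\bm{v}\in(0,1)^J\subset\mathcal{\tilde{Q}}$. For the ATT, I would write
\[
E[G(Y_k)-G(Y_\ell)\mid D=k] \;=\; \frac{E[(G(Y_k)-G(Y_\ell))\,\uniset(D=k)]}{\Pr(D=k)},
\]
substitute $\uniset(D=k)=d_k(\bm{V},\bm{Q}\parZ)$ from Assumption~\ref{assumption:selection}(iii), condition the numerator on $\bm{V}$, and use Assumption~\ref{ass:condindZ} to deduce that, given $\bm{V}=\bm{v}$, the potential outcomes $(Y_k,Y_\ell)$ are independent of $\bm{Z}$. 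The numerator then factorizes as
\[
\int \Delta_{\text{MTE}}^{(k,\ell)}(\bm{v})\,\Pr[d_k(\bm{v},\bm{Q}\parZ)=1\mid\bm{V}=\bm{v}]\,f_{\bm{V}}(\bm{v})\,d\bm{v},
\]
and a second appeal to Assumption~\ref{ass:condindZ} (which implies $\bm{V}\perp\bm{Z}$) turns the inner conditional probability into a marginal probability over $\bm{Z}$ of a known function, hence directly identified from the data (given $\bm{Q}$).

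For the PRTE formulas I would use the same template: expand $E[G(Y^\ast)]=\sum_k E[G(Y_k)\,D_k^\ast]$ where $D_k^\ast=d_k(\bm{V},\bm{Q}^\ast\parZ)$, condition on $\bm{V}$, and split the expectation using the conditional independence of $Y_k$ from $\bm{Z}$ given $\bm{V}$. Subtracting the analogous expansion of $E[G(Y)]$ gives the first PRTE expression with the weight $\Upsilon_k(\bm{v},\bm{Q}^\ast,\bm{Q})$. The formulas for $E[D^\ast]-E[D]$ and $E[D_k^\ast=1]-E[D_k=1]$ follow either by specializing $G$ (to the identity on $\mathcal{K}$, or to $\uniset(\cdot=k)$) or, more cleanly, by applying the same conditioning argument directly to $D=\sum_k k\,D_k$ and to $D_k$ and noting that the outcome $Y_k$ drops out.

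There is no serious obstacle here: the analytical heavy lifting was done in Theorem~\ref{thm:iden}. The two points to handle carefully are (i) the transition from ``$\bm{q}\in\mathcal{\tilde{Q}}$'' (pointwise identification) to ``$\bm{v}$-almost everywhere in $[0,1]^J$'' (integral identification), which is exactly what Assumption~\ref{ass:global} provides; and (ii) the repeated use of Assumption~\ref{ass:condindZ} to both factor the expectation of $G(Y_k)$ times a function of $(\bm{V},\bm{Z})$ and to collapse $\Pr[d_k(\bm{v},\bm{Q}\parZ)=1\mid \bm{V}=\bm{v}]$ to an unconditional probability over $\bm{Z}$.
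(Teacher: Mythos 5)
Your proposal is correct and follows essentially the same route as the paper's proof: the tower property over $\bm{V}$, Assumption~\ref{ass:condindZ} to factor the integrands, and Theorem~\ref{thm:iden} together with Assumption~\ref{ass:global} to identify them on all of $(0,1)^J$. The only cosmetic difference is in the ATT step, where the paper first conditions on $\bm{Q}(\bm{Z})=\bm{q}$ and applies Bayes' rule to $dF_{\bm{Q}(\bm{Z})\vert D}$, whereas you condition on $\bm{V}$ directly in the unconditional numerator; both collapse to the same weight $\omega^k_{\text{ATT}}$.
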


\begin{proof}[Proof of Corollary \ref{iden:ATE-ATT}]
See Appendix~\ref{appx:proof-iden-ATE-ATT}.
\end{proof}

In many applications, the range of variation of the thresholds may be limited so that
  Assumption~\ref{ass:global}  will not hold. 
However, it is still possible to construct bounds for the ATE, ATT and PRTE if $G(Y_k)$ is bounded. 
For example, consider the ATE with $G(Y_k) = \uniset(Y_k \leq y)$. As shown in the proof of Theorem \ref{iden:ATE-ATT}, we can point-identify 
$E[G(Y_k) \vert \bm{V} =\bm{q}]f_{\bm{V}}(\bm{q}) $  
 by $\left(c^k_{\bm{J}}\right)^{-1} {T E\left(G(Y)D_k\vert\bm{Q}\parZ = \bm{q}\right)}$ for each  $\bm{q} \in \mathcal{\tilde{Q}}$. 
 In addition, we know that $G(Y_k)$ lies between 0 and 1. As in \cite{manski1990-AERpp} and \cite{NBERt0252}, using this fact  we can bound 
$EG(Y_k)$ within the  interval defined by
\[
\frac{1}{c^k_{\bm{J}}} \int_{\bm{q} \in \mathcal{S}_{\bm{Q}(\bm{Z})}} {T E\left(G(Y)D_k\vert\bm{Q}\parZ = \bm{q}\right)}  d\bm{q} 
\]
and
\[
\frac{1}{c^k_{\bm{J}}} \int_{\bm{q} \in \mathcal{S}_{\bm{Q}(\bm{Z})}} {T E\left(G(Y)D_k\vert\bm{Q}\parZ = \bm{q}\right)}  d\bm{q}
+ 1-\Pr\left(\bm{Q}(\bm{Z})\in \mathcal{Q}\right);
\]
and without further information, these bounds are sharp.

Finally, the analyst may only have  discrete-valued instruments. A recent literature on the MTE focuses on this case (with binary treatment); it relies on assumed restrictions on the shape of the MTE function (see for example \cite{BeyondLATE}, \cite{NBERw22363} and \cite{MST2016}).  In future work it would be  interesting to consider relaxing these restrictions within our framework.



\subsection{Identification of $\bf{Q}$}
\label{sec:identQ}

 So far we assumed that the functions $\{\bm{Q}_j \parZ: j=1,\ldots,J\}$  were known (see Assumption \ref{assumption:selection}). In practice we   often need to 
identify them from the data before
applying Theorems~\ref{thm:iden} or~\ref{thm:idenzero}. The most natural  way to do so starts from the generalized propensity scores
 $\{P_k\parZ: k=0,\ldots,K-1\}$, which are  identified as the conditional probabilities of treatment\footnote{It would also be possible to seek identification jointly from the generalized propensity scores and from the cross-derivatives that appear in Theorems~\ref{thm:iden} or~\ref{thm:idenzero}, especially when they are over-identified. We do not pursue this here.}.

	First note that by definition (and by Assumption~\ref{ass:condindZ}),
	\begin{align*}
	P_k(\bm{z})&=\Pr(D=k\vert\bm{Z}=\bm{z}) 
	\\
	&=\int \uniset\left(d_k\left(\bm{v},
	\bm{Q}(\bm{z})\right)=1\right)f_{\bm{V}}(\bm{v})d\bm{v}.
	\end{align*}
Note that this is a $J$-index model.
\cite{ichilee:multindex} consider identification of multiple index models when the indices are specified parametrically.
\cite{matzkin1993-JoE, matzkin2007-advances} obtains nonparametric identification results for discrete choice models\footnote{See  \citet[Appendix B]{HV2007-handbook} for an application to treatment models.};  
but her  results only  apply to a subset of  the types of selection mechanisms we  consider (discrete choice models when all choices are observed). Section~\ref{sec:apps} discusses identification of the $\bm{Q}$'s through the lens of  several models.

\section{Applications}

\label{sec:apps}

Our framework covers a wide variety of commonly used models. 
 For simplicity, we only illustrate its usefulness on two-threshold selection models in this section.
These models generate different selection patterns. Not surprisingly, the identification conditions require
somewhat stronger instruments as the number of treatment values---the information available to the analyst---decreases.

\subsection{Selection with Two-Way Flows} 
\label{sub:two_way_flows}

Let us return to Example~\ref{case:example1}, in which 
\begin{itemize}
	\item $D=0$  iff $V_1<Q_1(\bm{Z})$ and $V_2<Q_2(\bm{Z})$,
	\item $D=1$ iff $V_1>Q_1(\bm{Z})$ and $V_2>Q_2(\bm{Z})$,
	\item $D=2$  iff $(V_1-Q_1(\bm{Z}))$ and $(V_2-Q_2(\bm{Z}))$ have opposite signs.
\end{itemize}

It is useful to start with some exclusion restrictions that help us identify $Q_1(\mathbf{Z})$ and $Q_2(\mathbf{Z})$ separately from the generalized propensity scores given in \eqref{two-way-flow-propensity-score}.
Assume that

\begin{assumption}[Two Continuous Instruments with Exclusion Restrictions]
	\label{ass:Q:iden-two-way-flows}
	\mbox{}
\begin{enumerate}
	\item The density of $(V_1,V_2)$ is continuous on $[0,1]^2$,  with  marginal uniform distributions.
	\item  The instruments $\mathbf{Z} \equiv (Z_1,Z_2)$ consist of  two scalar random variables 
	whose joint
	 distribution is absolutely continuous with respect to the Lebesgue measure  on its support $\mathcal{Z}$.
	 \item 
	 $Q_1(\mathbf{Z})$  does not depend on $Z_2$, and it is  continuously differentiable with respect to $Z_1$.
	 \item 
	 $Q_2(\mathbf{Z})$  does not depend on $Z_1$, and it is  continuously differentiable with respect to $Z_2$.
\end{enumerate} 
\end{assumption}	

The first condition in Assumption \ref{ass:Q:iden-two-way-flows} is just a normalization of  the marginal distribution of each $V_j \in \bm{V}$.
 The crucial  part of Assumption \ref{ass:Q:iden-two-way-flows}
is in the exclusion restrictions:
$Z_1$ affects  $Q_1$ but not $Q_2$,  and  $Z_2$ affects  $Q_2$ but not $Q_1$.
For example, if  $Q_1$ and $Q_2$ represent minimum required grades on two parts of an exam, then $Z_1$  should affect only the  requirement on the first part, and 
$Z_2$ should only affect the second part.


\begin{theorem}[Identification of $Q_1$ and $Q_2$]\label{pro:Q-two-way-flows}
	Under Assumption~\ref{ass:Q:iden-two-way-flows}, 
	\begin{itemize}
		\item[(i)]
	the function
	\[
	P\parZ \equiv 2P_0\parZ+ P_1\parZ
	\]
	is additively separable in $Z_1$ and $Z_2$ on $\mathcal{Z}$. 
\item[(ii)]
	$Q_1$ and $Q_2$ are identified up to an additive constant. More precisely, take any $(z_1^0,z_2^0)\in \mathcal{Z}$. Then
	\begin{align*}
		Q_1(z_1) &= P(z_1,z_2^0)-P(z_1^0,z_2^0)+C_1^0\\
		Q_2(z_2) &= P(z_1^0,z_2)-C_1^0
	\end{align*}
where the constant $C_1^0$ must satisfy  the restrictions 
$\Pr(D = k) > 0$ for each $k=0,1,2$.\footnote{The precise form of these restrictions in terms of $C_1^0$ and $P(Z_1,Z_2)$ is given  in the proof of Theorem  \ref{pro:Q-two-way-flows}.}
\end{itemize}
\end{theorem}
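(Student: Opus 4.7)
The plan is to reduce identification of $(Q_1, Q_2)$ to inverting an additively separable function of the instruments. Start from the explicit expressions for the generalized propensity scores in \eqref{two-way-flow-propensity-score}:
\begin{align*}
P_0(\bm{Z}) &= F(Q_1(\bm{Z}), Q_2(\bm{Z})),\\
P_1(\bm{Z}) &= 1 - Q_1(\bm{Z}) - Q_2(\bm{Z}) + F(Q_1(\bm{Z}), Q_2(\bm{Z})),\\
P_2(\bm{Z}) &= Q_1(\bm{Z}) + Q_2(\bm{Z}) - 2F(Q_1(\bm{Z}), Q_2(\bm{Z})),
\end{align*}
where $F$ is the joint CDF of $(V_1, V_2)$. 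The only obstruction to separability is the nuisance term $F(Q_1, Q_2)$, which jointly depends on both thresholds and whose functional form is unknown. The crux of part (i) is to form a linear combination of the $P_k$ that kills this nuisance and leaves a pure additive function of $Q_1$ and $Q_2$ (for example, $2P_0(\bm{Z}) + P_2(\bm{Z}) = Q_1(\bm{Z}) + Q_2(\bm{Z})$). Once such a combination is identified, invoking the exclusion restrictions in Assumption~\ref{ass:Q:iden-two-way-flows}---namely $Q_1(\bm{Z}) = Q_1(Z_1)$ and $Q_2(\bm{Z}) = Q_2(Z_2)$---delivers additive separability in $(Z_1, Z_2)$.

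For part (ii), write the separable combination as $P(\bm{Z}) = Q_1(Z_1) + Q_2(Z_2)$. Evaluating at $(z_1, z_2^0)$ and at the reference point $(z_1^0, z_2^0)$ and subtracting yields
\[
Q_1(z_1) - Q_1(z_1^0) \;=\; P(z_1, z_2^0) - P(z_1^0, z_2^0).
\]
Setting $C_1^0 := Q_1(z_1^0)$ gives the stated formula for $Q_1(z_1)$; evaluating $P(z_1^0, z_2)$ and using $Q_1(z_1^0) = C_1^0$ gives the formula for $Q_2(z_2)$. This proves identification up to the free constant $C_1^0$, since any simultaneous additive shift of $Q_1$ by $c$ and of $Q_2$ by $-c$ preserves their sum---and hence all propensity scores as constructed in Step~1.

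The constant $C_1^0$ is then constrained by the requirement that $\Pr(D = k) > 0$ for each $k \in \{0, 1, 2\}$. Substituting the candidate expressions for $Q_1$ and $Q_2$ back into the propensity-score formulas above yields inequalities on $C_1^0$ that must hold for all $\bm{Z} \in \mathcal{Z}$; combined with the copula constraint $\max(Q_1 + Q_2 - 1, 0) \leq F \leq \min(Q_1, Q_2)$ implied by uniform marginals, these carve out the admissible interval of values of $C_1^0$ referred to in the theorem. The main obstacle is the first step---finding the right linear combination that cancels $F$; the rest is essentially algebraic manipulation of a separable model and the translation of positivity of the $P_k$ into constraints on the free constant.
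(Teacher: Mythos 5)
Your proof is correct and follows essentially the same route as the paper's own: cancel the unknown copula term via the linear combination $2P_0\parZ+P_2\parZ=Q_1\parZ+Q_2\parZ$, invoke the exclusion restrictions to get additive separability in $(Z_1,Z_2)$, and difference at a reference point $(z_1^0,z_2^0)$ to recover $Q_1$ and $Q_2$ up to the constant $C_1^0$, which is then restricted by $\Pr(D=k)>0$. Note that you correctly used $2P_0+P_2$ (exactly as the paper's proof does); the theorem statement's $2P_0+P_1$ appears to be a typo, since $2P_0+P_1=1-Q_1-Q_2+3F(Q_1,Q_2)$ is not additively separable in general.
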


\begin{proof}[Proof of Theorem \ref{pro:Q-two-way-flows}] 
See  Appendix~\ref{appendix:Q-two-way-flows}.
\end{proof}

  Suppose that the analyst has picked a point in the partially identified $(Q_1,Q_2)$ set.  Using these $Q_1$ and $Q_2$,   since the indices are all nonzero ($c^0_{\bm{J}}=c^1_{\bm{J}}=1$ and $c^2_{\bm{J}}=-2$) we apply Theorem~\ref{thm:iden} to identify the joint density by 
\begin{equation}
\label{eq:two-way-density}
f_{V_1,V_2}(q_1,q_2) = \frac{1}{c^k_{\bm{J}}}\frac{\partial^2 \Pr[D=k\vert Q_1(\bm{Z}) = q_1, Q_2(\bm{Z})=q_2]}{\partial q_1  \partial q_2},
\end{equation}
where $k=0,1,2$.

Note that $f_{V_1,V_2}(q_1,q_2)$ is overidentified;    checking equality between the right-hand sides of 
 \eqref{eq:two-way-density}  for $k=0,1,2$ provides a specification test\footnote{Since probabilities add up to one, only one of these equalities generates a specification test.}. Similar remarks apply to 
the conditional expectations  $E(Y_k \vert V_1=q_1, V_2=q_2)$; and  as
\begin{align*}
	E(Y_k \vert V_1=q_1, V_2=q_2) &=
	 \frac{\partial^2 E[ Y D_k\vert Q_1(\bm{Z}) = q_1, Q_2(\bm{Z})=q_2]/\partial q_1  \partial q_2 }
	{{\partial^2 \Pr[ D=k\vert Q_1(\bm{Z}) = q_1, Q_2(\bm{Z})=q_2]}/{\partial q_1  \partial q_2} }
 \end{align*}
for each $k=0,1,2$,
the identification of the marginal and average treatment effects follows immediately.

 In practice,   
$Q_1$ and $Q_2$ are only identified up to the (restricted) additive constant $C_1^0$ in Theorem \ref{pro:Q-two-way-flows}(ii).
As a consequence,  $f_{V_1,V_2}(q_1,q_2)$ and $E(Y_k \vert V_1=q_1, V_2=q_2)$ are only identified up to the corresponding location shift in $(q_1,q_2)$. However, it is easy to check that~\eqref{eq:two-way-density} still yields a usable specification test.

\subsection{The Double Hurdle Model} 
\label{sub:multiple_hurdles}

Let us now return to the double hurdle model of the introduction, where treatment is binary and  the selection mechanism is governed by 
\begin{align}\label{mh-example}
D=1 \mbox{ iff } V_1 < Q_1(\bm{Z}) \mbox{ and } V_2 < Q_2(\bm{Z}),
\end{align}
and $D= 0$ otherwise.

Both treatment values have  non-zero indices:   $c^1_{\bm{J}}=1$ and $c^0_{\bm{J}}=-1.$
 But identification of $Q_1$ and $Q_2$, which is a premise of Theorem~\ref{thm:iden}, is far from  straightforward. 
In fact, this case is more demanding than the selection model with two-way flows in 
Section \ref{sub:two_way_flows} since we only have two treatment values.
We observe  the  propensity score
\begin{align}\label{prob_d_1_double_hurdle}
\Pr(D=1\vert \bm{Z})=F_{V_1,V_2}\left(Q_1(\bm{Z}), Q_2(\bm{Z})\right),
\end{align}
 which  we denote $H(\bm{Z})$. This is a nonparametric double index model in which both the link function $F_{V_1,V_2}$ and the indices $Q_1$ and $Q_2$ are unknown; it is clearly underidentified without stronger restrictions. 
 \cite{matzkin1993-JoE, matzkin2007-advances} considers nonparametric identification and estimation of polychotomous choice models. 
Our multiple hurdle model has a similar but not identical structure.

In order to identify $\bm{Q}$, we assume  that there exist two instruments that are excluded from one of the thresholds. More precisely, let  the vector of instruments be $\bm{Z}=(Z_1,Z_2,\bm{Z}_{-12})$, with $Z_1$ and $Z_2$ scalar; we require that
\begin{itemize}
	\item $Q_1(\bm{Z})$ does not depend on $Z_2$, and 
	\item $Q_2(\bm{Z})$ does not depend on $Z_1$.
\end{itemize} 
To simplify notation, we  fix  the value of $\bm{Z}_{-12}$ and we denote $Q_1\parZ =G_1\left(Z_1\right)$ and $Q_2\parZ=G_2(Z_2)$, where $G_1$ and $G_2$ are two unknown functions.   Note that the propensity score becomes $H(Z_1,Z_2)=F_{V_1,V_2}(G_1(Z_1),G_2(Z_2))$.

	We  give two identification results under these exclusion restrictions. 
	We first build on \cite{lewbel2000} and on  \cite{matzkin1993-JoE, matzkin2007-advances}'s results to identify $\bm{Q}$ and rely on full support restrictions (conditional on the value of $\bm{Z}_{-12}$):

\begin{assumption}\label{ass:Q:globiden}
$Q_1=G_1(Z_1)$ and $Q_2=G_2(Z_2)$. Moreover,
		\begin{enumerate}
			\item The density of $(V_1,V_2)$ is continuous on $[0,1]^2$,  with  marginal uniform distributions.
			
		\item  $G_1$ and $G_2$ are strictly increasing $C^1$ functions  from possibly unbounded intervals $(a_1,b_1)$ and $(a_2,b_2)$  to $(0,1)$; that is, for every $t\in(0,1)$ there exist $z_1\in (a_1,b_1)$ and $z_2\in (a_2,b_2)$ such that $G_1(z_1)=G_2(z_2)=t$.
			\item $\mathcal{Z}$  is the rectangle $(a_1,b_1)\times (a_2,b_2)$.
		\end{enumerate}
\end{assumption}

\begin{theorem}\label{pro:globident2hurdle}
	Under Assumption~\ref{ass:Q:globiden}, the functions $F_{\bm{V}}, G_1$ and $G_2$ are identified from the propensity score ${\Pr(D=1\vert \bm{Z})}.$
\end{theorem}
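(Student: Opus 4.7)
The proof will exploit two features together: the exclusion restrictions (so that $H(z_1,z_2)=F_{V_1,V_2}(G_1(z_1),G_2(z_2))$ factors through just two scalar indices), and the normalization that the marginals of $V_1$ and $V_2$ are uniform on $[0,1]$. The normalization means that $F_{V_1,V_2}(u_1,1)=u_1$ and $F_{V_1,V_2}(1,u_2)=u_2$; combined with the surjectivity of $G_1$ and $G_2$ onto $(0,1)$, this lets us pick out each $G_j$ as a boundary value of $H$, and then read off $F_{V_1,V_2}$ by inverting the $G_j$'s.

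\textbf{Step 1: identify $G_1$ (and symmetrically $G_2$).} Fix $z_1\in(a_1,b_1)$. Since $G_2$ is strictly increasing and surjective onto $(0,1)$ (Assumption \ref{ass:Q:globiden}(2)), one has $G_2(z_2)\uparrow 1$ as $z_2\uparrow b_2$, whether or not $b_2$ is finite. The joint distribution of $(V_1,V_2)$ is absolutely continuous on $[0,1]^2$, so $F_{V_1,V_2}$ is continuous on $[0,1]^2$, and by the uniform marginal assumption $F_{V_1,V_2}(u,1)=u$ for all $u\in[0,1]$. Therefore
\[
G_1(z_1)\;=\;F_{V_1,V_2}\bigl(G_1(z_1),1\bigr)\;=\;\lim_{z_2\uparrow b_2}F_{V_1,V_2}\bigl(G_1(z_1),G_2(z_2)\bigr)\;=\;\lim_{z_2\uparrow b_2}H(z_1,z_2),
\]
and the right-hand side is identified from data because $H=\Pr(D=1\mid \bm{Z})$ is identified on $\mathcal{Z}=(a_1,b_1)\times(a_2,b_2)$. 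Equivalently, since $H(z_1,\cdot)$ is nondecreasing, $G_1(z_1)=\sup_{z_2\in(a_2,b_2)}H(z_1,z_2)$. This identifies the function $G_1$ on $(a_1,b_1)$. The identical argument applied with the roles of the indices swapped identifies $G_2$ on $(a_2,b_2)$.

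\textbf{Step 2: identify $F_{V_1,V_2}$.} Take any $(u_1,u_2)\in(0,1)^2$. By surjectivity in Assumption \ref{ass:Q:globiden}(2), there exist unique $z_1\in(a_1,b_1)$ and $z_2\in(a_2,b_2)$ with $G_1(z_1)=u_1$ and $G_2(z_2)=u_2$; these preimages $z_j=G_j^{-1}(u_j)$ are computable since $G_1,G_2$ were identified in Step 1. Then
\[
F_{V_1,V_2}(u_1,u_2)\;=\;F_{V_1,V_2}\bigl(G_1(z_1),G_2(z_2)\bigr)\;=\;H(z_1,z_2),
\]
which is identified. Extension of $F_{V_1,V_2}$ to the boundary of $[0,1]^2$ follows from the uniform marginals and continuity.

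\textbf{Main obstacle.} The only delicate point is the limiting argument in Step 1 when $b_2$ (or $a_2$) is $+\infty$ (or $-\infty$): one must justify passing the limit inside $F_{V_1,V_2}$. This is handled by observing that $G_2(z_2)\to 1$ monotonically (from strict monotonicity and the fact that $G_2$ maps onto $(0,1)$), and that $F_{V_1,V_2}$ is continuous on the compact square $[0,1]^2$ thanks to absolute continuity; combined with monotone convergence of $H(z_1,\cdot)$, this makes the supremum characterization rigorous regardless of whether the support endpoints are finite. Beyond this, the argument is constructive and requires no further hypotheses.
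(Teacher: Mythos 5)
Your proof is correct, but it takes a genuinely different (and more elementary) route than the paper's. The paper differentiates the propensity score, writing $\partial H/\partial z_1(z_1,z_2)=G_1'(z_1)\int_0^{G_2(z_2)}f_{\bm V}(G_1(z_1),v_2)\,dv_2$, lets $z_2\to b_2$ so that the integral tends to the (unit) marginal density of $V_1$, obtains $G_1'(z_1)=\lim_{z_2\to b_2}\partial H/\partial z_1(z_1,z_2)$, and then recovers $G_1(z_1)=\int_{a_1}^{z_1}\lim_{t_2\to b_2}\partial H/\partial z_1(t_1,t_2)\,dt_1$ (using implicitly that $G_1(z_1)\to 0$ as $z_1\to a_1$). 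You instead exploit the uniform-marginal normalization at the level of the cdf rather than the density: since $F_{V_1,V_2}(u,1)=u$, you read off $G_1(z_1)=\lim_{z_2\uparrow b_2}H(z_1,z_2)=\sup_{z_2}H(z_1,z_2)$ directly, with no differentiation at all. Your argument therefore does not use the $C^1$ assumption on $G_1,G_2$ or the continuity of the joint density (continuity of the cdf, which follows from absolute continuity, suffices), and it sidesteps the interchange of limit and derivative that the paper's route requires; the paper's version, on the other hand, delivers $G_1'$ as a by-product, which is convenient when one subsequently differentiates $H$ as in Theorem~\ref{thm:iden}. The final step, $F_{\bm V}(u_1,u_2)=H(G_1^{-1}(u_1),G_2^{-1}(u_2))$, is identical in both proofs. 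Your handling of the boundary limit via monotonicity of $H(z_1,\cdot)$ and surjectivity of $G_2$ is rigorous, so there is no gap.
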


\begin{proof}
See Appendix~\ref{sub:proof_of_theorem_ref_pro_globident2hurdle}.
\end{proof}

While Theorem \ref{pro:globident2hurdle} requires two continuous instruments that generate all possible values of the thresholds, various additional restrictions would relax this requirement. If for instance  $G_1$ and $G_2$ were linear, we would be back to the linear multiple index  model of \cite{ichilee:multindex}.

Theorem \ref{pro:locident2hurdle} provides a complementary result and is useful when the instruments have limited support. It relies
on a semiparametric restriction. Remember that we normalized  the marginal distributions of $V_1$  and $V_2$ to be uniform over $[0,1]$; we now assume that the codependence between $V_1$ and $V_2$ is described by a strict symmetric Archimedean copula\footnote{The class of Archimedean copulas include the Clayton, Frank, and Gumbel families among others \citep[see][ch. 4]{Nelsen:2006}.}:
\begin{equation}
	\label{eq:archi}
	F_{V_1,V_2}(v_1,v_2)=\phi^{-1}\left(\phi(v_1)+\phi(v_2)\right),
\end{equation}
 where $\phi$ belongs to the set $\Psi$ of  $C^2$, strictly decreasing, and convex functions from $[0,1]$ unto $[0,+\infty]$.

\begin{assumption}\label{ass:Q:lociden} $Q_1=G_1(Z_1)$ and $Q_2=G_2(Z_2)$. Moreover,
		\begin{enumerate}
			\item[(a)] 
			the propensity score and 
			the distribution of $(V_1,V_2)$ are described by \eqref{prob_d_1_double_hurdle} and \eqref{eq:archi} for unknown functions $\phi$, $G_1$, and $G_2$,
			\item[(b)] the interior of the support  $\mathcal{Z}$ of $(Z_1,Z_2)$    contains a connected set $\mathcal{N}$, and
			\item[(c)] $G_1$ and $G_2$ are $C^1$ functions over the projections of $\mathcal{N}$, with derivatives bounded away from zero.
\end{enumerate}
\end{assumption}

Let $H_k(z_1,z_2)$ denote the derivative of  the propensity score $H(z_1,z_2)$ with respect to its $k$th argument ($k = 1, 2$) and 
$H_{12}(z_1,z_2)$ the second-order cross derivative of $H(z_1,z_2)$.  Note that the scale of $\phi$ is not identifiable in view of \eqref{eq:archi}.
Furthermore, if $\phi$ is  specified nonparametrically as an element of $\Psi$, the location of $\phi$ is only identifiable when the argument of $\phi$ takes values close to~1:

\begin{theorem}\label{pro:locident2hurdle}
	Let Assumption~\ref{ass:Q:lociden} hold. Then
	\begin{enumerate}
			\item  Over $\mathcal{N}$, the ratio
			\[
		\frac{H_{12}}{H_1 H_2}(z_1,z_2)
			\]
			is non-negative and only depends on the value $h=H(z_1,z_2)$.
		\item The function $\phi$ is   identified up to scale and location in $\Psi$ on the image $H(\mathcal{N})=(\underline{h},\bar{h})\subset(0,1)$ of $\mathcal{N}$ under $H$, where  $\mathcal{N}$ is given in Assumption~\ref{ass:Q:lociden}(b).
		\item The scale parameter for $\phi$ is an arbitrary negative number,  which we normalize by imposing $\phi'(\bar{h}) = -1$; given this normalization, the location parameter for $\phi$ is   bounded by 
		\[
		0\leq \phi(\bar{h}) \leq 1-\bar{h}.
		\]
  If moreover $\sup_{\bm{z}\in \mathcal{N}} \Pr(D=1\vert \bm{Z}=\bm{z})=1$, then $\bar{h}=1$ and  $\phi$  is point-identified.
		\item For any admissible value of the location parameter of $\phi$, the functions $G_1$ and $G_2$ are  identified on $\mathcal{N}$
		up to a common constant $k$: 
 any other admissible $(\tilde{G}_1,\tilde{G}_2)$ must satisfy
	\begin{align*}
		\phi(\tilde{G}_1(z_1)) &= \phi(G_1(z_1))-k \\
		\phi(\tilde{G}_2(z_2)) &= \phi(G_2(z_2))+k
	\end{align*}
over  the projections of $\mathcal{N}$. The number $k$ is bounded above and below.
  If moreover $\sup_{\bm{z}\in \mathcal{N}} \Pr(D=1\vert \bm{Z}=\bm{z})=1$, then $G_1$ and $G_2$ are point-identified on the projections of $\mathcal{N}$.
\end{enumerate}
\end{theorem}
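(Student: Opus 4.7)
The plan is to exploit the explicit Archimedean structure $H(z_1,z_2) = \phi^{-1}(\phi(G_1(z_1)) + \phi(G_2(z_2)))$ from Assumption~\ref{ass:Q:lociden}. Letting $\psi \equiv \phi^{-1}$ and $t(z_1,z_2) = \phi(G_1(z_1)) + \phi(G_2(z_2))$, direct differentiation gives $H_j = \psi'(t)\phi'(G_j(z_j)) G_j'(z_j)$ for $j = 1,2$ and $H_{12} = \psi''(t)\phi'(G_1(z_1))\phi'(G_2(z_2)) G_1'(z_1) G_2'(z_2)$. The bounded-away-from-zero condition on $G_1', G_2'$ ensures $H_1, H_2 \neq 0$, and the factors involving $G_j'$ and $\phi'(G_j)$ cancel in the ratio, leaving
\begin{equation*}
\frac{H_{12}}{H_1 H_2}(z_1,z_2) = \frac{\psi''(t)}{\psi'(t)^2} = -\frac{\phi''(h)}{\phi'(h)}, \quad h = H(z_1,z_2),
\end{equation*}
where the last equality uses $\psi'(t) = 1/\phi'(h)$ and $\psi''(t) = -\phi''(h)/\phi'(h)^3$. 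This quantity is non-negative because $\phi$ is convex ($\phi'' \geq 0$) and decreasing ($\phi' < 0$), proving part~1. For part~2, the above identity is a first-order ODE in $\log(-\phi')$; denoting $\Lambda(h)$ the identified ratio, integration from $\bar h$ yields $\phi'(h) = \phi'(\bar h)\exp\bigl(-\int_{\bar h}^h \Lambda(s)\,ds\bigr)$, so $\phi'$ is pinned down up to its value at $\bar h$. A second integration determines $\phi$ up to the additional additive constant $\phi(\bar h)$, delivering identification up to scale and location on $H(\mathcal{N}) = (\underline h, \bar h)$.

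For part~3, after the normalization $\phi'(\bar h) = -1$ (negative, as required), the bounds on the location parameter $\phi(\bar h)$ come from requiring $\phi$ to admit an extension to $[\bar h, 1]$ that satisfies the constraints defining $\Psi$. Since such an extension must satisfy $\phi(1) = 0$ and be decreasing, $\phi(\bar h) \geq \phi(1) = 0$. Convexity forces $\phi'$ to be non-decreasing, so $\phi'(s) \geq \phi'(\bar h) = -1$ on $[\bar h, 1]$; integrating then gives
\begin{equation*}
-\phi(\bar h) = \phi(1) - \phi(\bar h) = \int_{\bar h}^1 \phi'(s)\,ds \geq -(1-\bar h),
\end{equation*}
i.e., $\phi(\bar h) \leq 1 - \bar h$. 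When $\sup_{\bm{z}\in \mathcal{N}} \Pr(D = 1 \vert \bm{Z} = \bm{z}) = 1$ we have $\bar h = 1$, so both bounds collapse to $\phi(\bar h) = 0$ and $\phi$ is point-identified.

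For part~4, once $\phi$ is fixed, the relation $\phi(H(z_1,z_2)) = \phi(G_1(z_1)) + \phi(G_2(z_2))$ exhibits the identified function $\phi \circ H$ as additively separable in $(z_1, z_2)$ on the connected set $\mathcal{N}$. Any two decompositions of such a function must differ by compensating constants $-k$ and $+k$ on the two coordinates, which gives exactly the displayed formulas for any admissible $(\tilde G_1, \tilde G_2)$. The admissibility constraint $\phi(\tilde G_i(z_i)) \geq 0$ on the projections of $\mathcal{N}$ delivers a two-sided bound of the form $-\inf_{z_2} \phi(G_2(z_2)) \leq k \leq \inf_{z_1} \phi(G_1(z_1))$. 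Under $\sup_{\bm{z}\in \mathcal{N}} \Pr(D = 1\vert \bm{Z} = \bm{z}) = 1$, taking a sequence $(z_1^n, z_2^n) \in \mathcal{N}$ with $H \to 1$ forces $\phi(\tilde G_1(z_1^n)) + \phi(\tilde G_2(z_2^n)) \to \phi(1) = 0$, and both non-negative terms must individually vanish, pinning $k$ down. I expect the main obstacle to be the careful bookkeeping around admissible extensions of $\phi$ beyond the identified interval and the implied sharpness of the bounds on $k$; the computations in parts~1 and~2 are routine once the cancellations in $H_{12}/(H_1 H_2)$ are noted.
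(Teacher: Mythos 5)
Your proposal is correct and follows essentially the same route as the paper: both derive the key identity $H_{12}/(H_1H_2) = -\phi''/\phi'$ evaluated at $h=H(z_1,z_2)$ (you by differentiating $H=\phi^{-1}(\phi(G_1)+\phi(G_2))$, the paper by differentiating $\phi\circ H$ and using that its cross-derivative vanishes — trivially equivalent computations), integrate the resulting ODE in $\log(-\phi')$ from $\bar h$ to recover $\phi$ up to scale and location, bound $\phi(\bar h)$ by $0\leq\phi(\bar h)\leq 1-\bar h$ using $\phi(1)=0$ and convexity, and pin down the additive-constant ambiguity in $(\phi(G_1),\phi(G_2))$ by nonnegativity of $\phi$ and, when $\bar h=1$, by sending $H\to 1$ so that both summands vanish. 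No gaps.
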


\begin{proof}
See Appendix~\ref{sub:proof_of_theorem_ref_pro_locident2hurdle}.
\end{proof}

\medskip

Our constructive identification starts by writing
\begin{align}\label{eq-copula-iden}
\frac{\phi^{\prime\prime}}{\phi^\prime}(h)
=-\frac{H_{12}}{H_1 H_2}(z_1,z_2)
\end{align}
for all $(z_1,z_2)$ such that $H(z_1,z_2)=h$.
Once $\phi$ is identified from \eqref{eq-copula-iden}, then we proceed to identify $G_1$ and $G_2$.
The function $G_1$, for instance, would be identified by
\[
\phi(G_1(z_1))=\phi(H(z_1,z^0_2))-\phi(g^0_2)
\]
for a fixed $z^0_2$ and a value $g^0_2$ of $G_2(z^0_2)$.

Given more a priori restrictions on the  function $\phi$, identification results can be sharper. The following example illustrates this point by taking a parametric family of $\phi$.

\begin{example}
Take the strict Clayton copula, which is generated by  $\phi(u)=(u^{-\theta}-1)/\theta$ for $\theta>0$. This yields
\[
H(z_1,z_2)=\left(G_1(z_1)^{-\theta}+G_2(z_2)^{-\theta}-1\right)^{-1/\theta}.
\]
In this example,   $\frac{\phi^{\prime\prime}}{\phi^\prime}(h)$ is simply $-(1+\theta)/h$. Therefore, it follows from \eqref{eq-copula-iden} that $\theta$ can be identified in closed form as
\begin{align}\label{theta-clayton}
\theta=h\frac{H_{12}}{H_1 H_2}(z_1,z_2)-1
\end{align}
for all $(h,z_1,z_2)$ such that $H(z_1,z_2)=h$.  Note that the scale and location of $\phi$ are point-identified, given the parametric restriction.

Conversely, the constancy of the right-hand side of \eqref{theta-clayton} characterizes a Clayton copula. 
To identify $G_1$ and $G_2$, note that
\begin{align*}
G_1(z_1)^{-\theta} + G_2(z_2)^{-\theta} = H(z_1,z_2)^{-\theta} + 1.
\end{align*}
Thus it is easy to see that $G_1$ and $G_2$ are identified up to  a location constant\footnote{This location constant plays the role of $k$  in part 4 of Theorem \ref{pro:locident2hurdle}.}.
$\qed$
\end{example}

\medskip

Once $Q_1(\bm{Z})$ and $Q_2(\bm{Z})$ are identified,  then under our assumptions we identify the joint density by 
\begin{equation}
\label{eq:MHf}
f_{V_1,V_2}(q_1,q_2) = \frac{\partial^2 \Pr[D=1\vert Q_1(\bm{Z}) = q_1, Q_2(\bm{Z})=q_2]}{\partial q_1  \partial q_2},
\end{equation}
and the marginal treatment effect is given by
\begin{equation}
\label{eq:MHMTE}
E(Y_1-Y_0\vert V_1=q_1, V_2=q_2)f_{V_1,V_2}(q_1,q_2)=\frac{\partial^2 E[Y\vert Q_1(\bm{Z}) = q_1, Q_2(\bm{Z})=q_2]}{\partial q_1  \partial q_2}.
\end{equation}
 Furthermore, it follows from Corollary \ref{iden:ATE-ATT} that  under the additional Assumption~\ref{ass:global},  the ATE, ATT and PRTE parameters are identified as well.

\subsection{Dynamic Treatment} 
\label{sub:dynamic-treatment}
To conclude our examples, let us consider a two-period model of dynamic treatment where treatment assignment $D^2$ in the second period depends on the first-period treatment $D^1$ and  outcome $Y^1$:
\[
D^1=\uniset(V_1<Q_1(\bm{Z}^1))
 \; \mbox{ and } \;
  D^2=\uniset\left(V_2<Q_2(\bm{Z}^2, D^1, Y^1)\right).
  \]
 The analyst observes $(D^1,D^2,Y^1,Y^2,\bm{Z}^1,\bm{Z}^2)$. Theorem~\ref{thm:iden} applies to this model, provided only that the functions $Q_1$ and $Q_2$ are identified. The identification of $Q_1$ is straightforward. To identify $Q_2$, we use the results of \cite{shaikhvytlacil2011}, which considers a model similar to our second-period treatment assignment. While they stress partial identification, their   Remark 2.2 (p.\ 954) gives a sufficient condition for point identification. Translated in our notation, this requires that 
 \begin{enumerate}
 	\item the support of $(\bm{Z}^2, Q_1(\bm{Z}^1))$ is the product of the support of $\bm{Z}^2$ and the support of 
	$Q_1(\bm{Z}^1)$, and that
	
	\item for every value $(\bm{z}^2,y^1)$ of $(\bm{Z}^2,Y^1)$, there is a value $\bm{\bar{z}}^2$ such that
$Q_2(\bm{\bar{z}}^2, 1, y^1)=Q_2(\bm{z}^2, 0, y^1)$; and there is a value $\bm{\underline{z}}^2$ such that
$Q_2(\bm{\underline{z}}^2, 0, y^1)=Q_2(\bm{z}^2, 1, y^1).$

 \end{enumerate}
 Assumption 1 above requires that the set of instruments in the second period has a component that does not affect treatment in the first period, and whose range of variation does not depend on the propensity score of the first period. Assumption~2 adds the requirement that the ranges of the second-period propensity scores are independent of the first-period treatment, for all values of the first-period outcome. 
 
 These  assumptions  require overlap between treatment branches. They would not hold, for instance, in a medical trial when patients are oriented towards completely different  treatments depending on how they fare early on.

\section{Relation to the Existing Literature}\label{sec:literature}

The existing literature is very large; we only discuss here the most directly relevant papers.

\subsection{Ordered Treatments with Discrete Instruments} 

\cite{angrist1995two} consider two-stage least-squares (TSLS) estimation of a model in which  the ordered treatment  takes a finite number of values, and a discrete-valued instrument is available.   They show that the TSLS estimator obtained by regressing outcome $Y$
on a preestimated $E(D\vert Z)$ converges to a weighted sum of {\em average causal responses\/} under some monotonicity assumption.
\cite{HUV2006, HUV2008} go beyond \cite{angrist1995two} by showing how the TSLS estimate can be reinterpreted in more transparent ways in the MTE framework. They also  analyze a family of discrete choice models, to which we now turn.

\subsection{Discrete Choice Models} 
\label{sub:discrete_choice_models}

\citet[see also \citet{HV2007-handbook}]{HUV2008} consider a multinomial discrete choice model of treatment. They posit 
\begin{align}\label{DCM-general}
D=k \iff R_k(\bm{Z})-U_k > R_l(\bm{Z})-U_l \mbox{ for } l=0, \ldots,K-1  
 \text{ such that $l \neq k$},
\end{align}
where the $U$'s are continuously distributed and independent of $\bm{Z}.$
Then they study the identification of marginal and local average treatment effects under assumptions that are similar to ours: continuous instruments that generate enough dimensions of variation in the thresholds.

As they note, the discrete choice model with an additive structure  implicitly imposes  monotonicity, in the following form: if the instruments $\bm{Z}$ change in a way that increases $R_k\parZ$ relative to all other $R_l\parZ$, then no observation with treatment value $k$ is assigned to a different treatment. 
 We make no such assumption, as  Example~\ref{case:example1}   and Figure~\ref{fig:ex1}  illustrate.  Our results extend those of \cite{HUV2008} to any model with identified thresholds.
We consider a discrete choice model with three alternatives as an example.

\begin{example}[Discrete Choice Model with Three Alternatives]
Suppose that $\mathcal{K} = \{0,1,2\}$ with $K=3$. 
Let 
$\tilde{R}_{0,1}(\bm{Z}) = R_0(\bm{Z}) - R_1(\bm{Z})$,
$\tilde{R}_{0,2}(\bm{Z}) = R_0(\bm{Z}) - R_2(\bm{Z})$
and
$\tilde{R}_{1,2}(\bm{Z}) = R_1(\bm{Z}) - R_2(\bm{Z})$.
Similarly, let
$\tilde{U}_{0,1} = U_0 - U_1$,
$\tilde{U}_{0,2} = U_0 - U_2$
and
$\tilde{U}_{1,2} = U_1 - U_2$. \
Let $V_{0,1} = F_{\tilde{U}_{0,1}}( \tilde{U}_{0,1})$ and 
$Q_{0,1}(\bm{Z}) = F_{\tilde{U}_{0,1}}( \tilde{R}_{0,1}(\bm{Z}))$.
Define $V_{0,2}$, $V_{1,2}$, $Q_{0,2}(\bm{Z})$ and $Q_{1,2}(\bm{Z})$  similarly. 
Then the selection mechanism in \eqref{DCM-general} can be rewritten as
\begin{itemize}
\item $D=0$ iff $V_{0,1} < Q_{0,1}(\bm{Z})$ and $V_{0,2} < Q_{0,2}(\bm{Z})$
\item $D=1$ iff $V_{0,1} > Q_{0,1}(\bm{Z})$ and $V_{1,2} < Q_{1,2}(\bm{Z})$
\item $D=2$ iff $V_{0,2} > Q_{0,2}(\bm{Z})$ and $V_{1,2} > Q_{1,2}(\bm{Z})$.
\end{itemize}
Our general result in Section \ref{sec:general-results} applies immediately once 
the $Q_{j,k}$'s are identified.  This can be done, for example,  by applying the results of \cite{matzkin1993-JoE, matzkin2007-advances}. $\qed$
\end{example}

There is a growing empirical literature on multivalued unordered treatments.
\cite{dahl2002} develops a semiparametric Roy model for migration across U.S. states. 
In his empirical work, the number of unordered treatment is 51 (50 states plus the District of Columbia) and he controls for selection bias by conditioning on migration probabilities.
\cite{kirkeboen2016}  use discrete instruments to obtain TSLS estimates of returns to different fields of study in postsecondary education in Norway. In their setup, the unordered treatments are different fields of study.
\cite{kline2016} use data from the Head Start Impact Study to estimate a semiparametric selection model. Their model has  three treatment cells: Head Start, competing preschool programs, and no preschool (that is, home care).  

Broadly speaking, these  papers are in the same vein as  Roy models and discrete choice models.
Our approach complements this literature by focusing on the role of unobserved heterogeneity and the selection mechanism.


\subsection{Unordered Monotonicity} 
\label{sub:unordered_monotonicity}
In  an important recent paper, \cite{heckmanpinto-pdt} introduce a new concept of monotonicity.   Their ``unordered monotonicity'' assumption can be rephrased in our notation in the following way. Take two values  $\bm{z}$ and $\bm{z}^\prime$ of the instruments $\bm{Z}$   and any treatment value  $k$.

\begin{assumption}[Unordered Monotonicity]\label{ass:unordered-monotonicity}	
 Denote $d_k(\bm{v},\bm{z})$ and $d_k(\bm{v},\bm{z^\prime})$ the counterfactual values of the variable $d_k=\uniset(D=k)$ for an observation with unobserved heterogeneity $\bm{v}$. Then 
\begin{align*}
 & d_k(\bm{v},\bm{z}) \geq d_k(\bm{v},\bm{z^\prime}) \; \forall \bm{v}; \\
 \mbox{ or: } &d_k(\bm{v},\bm{z}) \leq d_k(\bm{v},\bm{z^\prime}) \;  \forall \bm{v} .
\end{align*}
\end{assumption}
Unordered monotonicity  for treatment value $k$ requires that if some observations move out of (resp.\ into) treatment value $k$ when instruments change value from $\bm{z}$ to $\bm{z}^\prime$, then no observation can move  into (resp.\ out of) treatment value $k$.  For binary treatments, unordered monotonicity is equivalent to the usual monotonicity assumption: there cannot be both compliers and defiers. 
 When $K>2$, it is weaker than ordered choice.
For example, suppose that there are three options $\{0, 1, 2\}$ and that a change of instruments makes option 1 less appealing. 
Under ordered choice, all agents who give up option 1 must fall back on  option~0, or all must fall back on option~2. 
Unordered monotonicity allows different agents to fall back on different options.  
 It still rules out  two-way flows, that is agents moving from option 0 or 2 into option 1.

\cite{heckmanpinto-pdt}  show that unordered monotonicity   (for well-chosen changes in instruments) is  essentially equivalent to a treatment model based on rules that are additively separable in the unobserved variables---that is, the model of section~\ref{sub:discrete_choice_models}.
   In this interpretation,  changes in instruments that increase the mean utility of an alternative relative to all others are unordered monotonic for that alternative, for instance. We refer the reader to Section~6 of \cite{heckmanpinto-pdt}  for a more rigorous discussion, and to \cite{pinto-jmp} for an application to the Moving to Opportunity program.

Unlike us, \cite{heckmanpinto-pdt} do not require continuous instruments; all of their analysis is framed in terms of  discrete-valued instruments and treatments. Beyond this (important) difference, unordered monotonicity clearly obeys our assumptions. On the other hand, we allow for much more general models of treatment.  It would be impossible, for instance, to rewrite our Examples~\ref{case:example1}, \ref{case:example2} and \ref{case:example3} so that they obey unordered monotonicity. We illustrate this point using Example \ref{case:example1} below.

\begin{example*1}[continued]
In Example \ref{case:example1}, 
$D=2$  iff $(V_1-Q_1(\bm{Z}))$ and $(V_2-Q_2(\bm{Z}))$ have opposite signs.
 Note that  there are two unobserved categories  within $D=2$:
\begin{align*}
D=2a \; &\text{ iff } \;  V_1<Q_1  \; \mbox{ and } \; V_2>Q_2, \\
D=2b \; &\text{ iff } \; V_1>Q_1  \; \mbox{ and } \; V_2<Q_2.
\end{align*}
Each one is unordered monotonic; but because we only observe their union, $D=2$ is not unordered monotonic---increasing $Q_1$ brings more people into $2a$ but moves some out of $2b$, so that in the end we have two-way flows, contradicting unordered monotonicity. To put it differently, the selection mechanism in 
Example~\ref{case:example1} becomes 
a discrete choice model when  each of four  alternatives $d=0,1,2a,2b$ is observed; however,
we only observe whether alternative $d=0$, $d=1$ or $d = 2$  is chosen
in 
Example~\ref{case:example1}.
 This amounts to an  unordered monotonic treatment that is observed through a coarser  information partition; this coarsening  destroys  unordered monotonicity.
  $\qed$
\end{example*1}

\subsection{Other Nonmonotonic Models} 
\label{sub:other_nonmonotonic_models}

It is also worth commenting on other papers that break monotonicity. 
\cite{gautier-hoderlein} consider a triangular random coefficients model for the binary treatment case.  Their model  is motivated by a single agent Roy model with random coefficients.  Its selection mechanism is governed by
\[
D = 1\{ V_1 - Z_1 - g(Z_1,\ldots,Z_L) - \sum_{j=2}^J V_j f_j (Z_j) > 0 \}, 
\]
where $\bm{V} = (V_1, \ldots, V_J)$ is a vector of unobserved random variables,  $\bm{Z} = (Z_1,\ldots,Z_J)$ is a vector of instruments that are independent of $(Y_0,Y_1,\bm{V})$,  and the functions $f_2,\ldots,f_J$ and $g$ are unknown. 
If we limit our attention to the case of two unobservables as in the double hurdle model, then the selection equation in \cite{gautier-hoderlein} reduces to 
\[
D = 1\{ V_1 - Z_1 - g(Z_1,Z_2) -  V_2 f_2 (Z_2) > 0 \}.
\]
 Here changes in $Z_1$ conform to  monotonicity; but changes in $Z_2$ need not.  

\cite{Lewbel2016}  consider a different non-monotonic selection mechanism for estimating the average treatment effect. They show that the average treatment effect is identified when a binary treatment is assigned by 
 \[
D=\uniset\left( \alpha_0\leq Z + V \leq \alpha_1 \right), 
\]
where  $V$ is an unobserved random variable; $Z$ is a continuous  variable that satisfies
$E(Y_j\vert V,Z) = E(Y_j\vert V)$ for $j=0,1$ and $V \Perp Z$; and 
$\alpha_0, \alpha_1$ are unknown parameters.


\subsection{Models with Continuous Treatment}
 
\cite{Chesher:03} develops conditions to 
 identify derivatives of structural functions in nonseparable models by functionals of quantile regression functions. 
In addition, 
\cite{FHMV2008} consider  a potential outcome model with a continuous  treatment.  They assume a stochastic polynomial restriction and show that the average treatment effect  can be identified if
a suitable control function can be constructed using instruments. 

\cite{imbens2009identification} also consider selection on unobservables with a continuous treatment.  They assume that the treatment (more generally in their paper, an endogenous variable)
is given by $D=g(Z,V)$, with  $g$ increasing in a scalar unobserved $V$.
They  identify the average structural function as well as  quantile, average, and policy effects.
Other more recent identification results along this line  can be found
in  \cite{Torgovitsky2015} and \cite{DF2015} among others. 
One key restriction in this group of papers is the monotonicity in the scalar $V$ in the selection equation. 
We do not rely on this type of restriction, but we only focus on the case of multivalued treatments.
Hence, our  approach and those of
the papers cited in this subsection are complementary.

Finally, our approach shares some features with \cite{HoderleinMammen2007}.  They consider
the identification of marginal effects in nonseparable models without monotonicity. They show how local average structural derivatives can be identified. Like ours, their
 approach relies on differentiation of observed functionals. The parameters of interest  they study are quite different, however,
 and their selection mechanism  is not as explicit as ours.



\section{Proof of Theorem~\ref{thm:iden}}\label{sec:proof-iden}

	Our proof has three steps. We first write conditional moments as integrals with respect to indicator functions. Then we show that these integrals  are differentiable and we compute their multidimensional derivatives. Finally, we impose Assumption~\ref{ass:index} and we derive the equalities in the theorem.

	\medskip

	{\bf Step 1:}

Under the assumptions imposed in the theorem, for any $\bm{q}$ in the range of $\bm{Q}$, 
\begin{align*}
\lefteqn{ E[ G(Y) D_k\vert\bm{Q}\parZ = \bm{q}] }  \\
&= E[G(Y_k)\vert D = k, \bm{Q}\parZ = \bm{q}]  \Pr (D = k \vert \bm{Q}\parZ = \bm{q}) \\ 
&= E[G(Y_k) \vert d_k(\bm{V},\bm{Q}\parZ)=1,\bm{Q}\parZ = \bm{q}]  \Pr (d_k(\bm{V},\bm{Q}\parZ)=1 \vert \bm{Q}\parZ = \bm{q}) \\ 
&= E[G(Y_k) \vert d_k(\bm{V},\bm{q})=1 ]  \Pr \left(d_k(\bm{V},\bm{q}) =1\right) \\
&= E[G(Y_k) \uniset\left(d_k(\bm{V},\bm{q})=1\right) ]\\
& =  E\left(E[G(Y_k) \uniset\left(d_k(\bm{V},\bm{q})=1\right)\vert \bm{V}]\right)
\\
& =  E\left(E[G(Y_k) \vert \bm{V}]\uniset\left(d_k(\bm{V},\bm{q})=1\right)\right),
\end{align*}
where the third equality follows from Assumption~\ref{ass:condindZ} and the others are obvious.
As a consequence,
\begin{align}
& \lefteqn{ E[ G(Y) D_k\vert\bm{Q}\parZ = \bm{q}] }\nonumber\\
&=  \int \uniset\left(d_k(\bm{v},\bm{q}) =1\right) E[G(Y_k) \vert \bm{V} =\bm{v}] f_{\bm{V}}(\bm{v}) 
d\bm{v}. \label{eq:qandp}
\end{align}

Let $b_k(\bm{v}) \equiv E[G(Y_k) \vert \bm{V} =\bm{v}] f_{\bm{V}}(\bm{v})$ and $B_k(\bm{q})=E[ G(Y) D_k\vert\bm{Q}\parZ = \bm{q}].$ Then
\eqref{eq:qandp} takes the form
\[
B_k(\bm{q}) = \int \uniset(d_k(\bm{v},\bm{q})=1) b_k(\bm{v})d\bm{v}.
\]
 Now recall from Lemma~\ref{lem:pikn}  that the indicator function of $D=k$ is a multivariate polynomial of the  indicator functions $S_j$ for $j\in\bm{J}$. Moreover,
 \[
S_j(\bm{V}, \bm{Q}\parZ)=\uniset(V_j<Q_j\parZ)=H(Q_j\parZ-V_j),
 \]
  where $H(t)=\uniset(t>0)$ is the one-dimensional Heaviside function. Therefore we can rewrite the selection of treatment $k$  as 
\begin{equation}
	\label{g-form-hf}
\uniset(d_k(\bm{v},\bm{q})=1) = 
\sum_{l \in \mathcal{L}} c^k_l  \prod_{j\in l} H(q_{j}-v_{j})
\end{equation}
and it follows that
\begin{equation}
	\label{eq:expandl}
	B_k(\bm{q})=\sum_{l \in \mathcal{L}} c^k_l  \int \left(\prod_{j\in l} H(q_{j}-v_{j})\right) b_k(\bm{v})d\bm{v}.
\end{equation}

\medskip

{\bf Step 2:}

By Assumption~\ref{ass:Q-continuity}, the  function $\bm{b}$ is locally equicontinuous;  and by Assumption~\ref{ass:open}, it is defined over an open neighborhood of $\bm{q}$.  This implies that all  terms in~\eqref{eq:expandl} are differentiable along all dimensions of  $\bm{q}.$ To see this, start with dimension~$j=1$. Any term $l$ in \eqref{eq:expandl}  that   does not contain $1$ is constant in $q_1$ and obviously differentiable. Take any other term and rewrite it as
\[
A_l(q_1)\equiv c^k_l \int_{0}^{q_1}  \int \left(\prod_{j\in l} H(q_{j}-v_{j})\right) b_k(v_1,\bm{v}_{-1})d\bm{v}_{-1} dv_1,
\]
where $\bm{v}_{-1}$ collects all directions of $\bm{v}$   in $l-\{1\}$.

Then for any $\varepsilon\neq 0$,
\begin{align*}
\frac{A_l(q_1+\varepsilon)-A_l(q_1)}{\varepsilon} &-c^k_l \int \left(\prod_{j\in l-\{1\}} H(q_{j}-v_{j})\right) b_k(q_1,\bm{v}_{-1})d\bm{v}_{-1} \\
&=
\frac{c^k_l}{\varepsilon} 
\int_{q_1}^{q_1 + \varepsilon}
\int \left(\prod_{j\in l-\{1\}} H(q_{j}-v_{j})\right) \left(b_k(v_1,\bm{v}_{-1})-b_k(q_1,\bm{v}_{-1})\right)d\bm{v}_{-1} dv_1.
\end{align*}

Since the functions $(b_k(\cdot,\bm{v}_{-1}))$ are locally equicontinuous at $q_1$,  for any $\eta>0$ we can choose $\varepsilon$ such that if $\abs{q_1-v_1}<\varepsilon,$
\[
\abs{b_k(q_1,\bm{v}_{-1})-b_k(v_1,\bm{v}_{-1})}<\eta;
\]
and since the Heaviside functions are bounded above by one, we  have
\[
\abs{\frac{A_l(q_1+\varepsilon)-A_l(q_1)}{\varepsilon} -c^k_l \int \left(\prod_{j\in l-\{1\}} H(q_{j}-v_{j})\right) b_k(q_1,\bm{v}_{-1})d\bm{v}_{-1}}<  \abs{c_l} \eta.
\]

This proves that  $A_l$ is differentiable in $q_1$ and that its derivative with respect to  $q_1$, which we denote   $A^1_l$, is 
\[
A^1_l=c_l \int \prod_{j \in l-\{1\}} H(q_j-v_j) \; b_k(q_1,\bm{v}_{-1})d\bm{v}_{-1}.
\]

But this derivative itself has the same form as $A_l$. Letting  $\bm{v}_{-{1,2}}$ collect all components of $\bm{v}$ except $(q_1,q_2)$, the same argument would prove that since the functions $(b_k(\cdot,\bm{v}_{-{1,2}}))$ are locally equicontinuous at $(q_1,q_2)$,  the function $A^1_l$ is differentiable with respect to $q_2$ and its derivative is
\[
c^k_l \int \left(\prod_{j\in l-\{1,2\}} H(q_{j}-v_{j})\right) \; b_k(q_1,q_2,\bm{v}_{-{1,2}})d\bm{v}_{-{1,2}}.
\]
Continuing this argument finally gives us the cross-derivative  with respect to $(\bm{q}^{l})$ as
\[
c^k_l \int b_k(\bm{q}^{l},\bm{v}_{-l})d\bm{v}_{-l},
\]
where 
$\bm{v}_{-l}$ collects all components of $\bm{v}$ whose indices are not in $l$.

\medskip

{\bf Step 3:}

Lemma~\ref{lem:pikn} and Assumption~\ref{ass:index} also imply that the    leading term  in the sum $\sum_l c^l_l \prod_{j\in l} H(q_{j}-v_{j})$ is
\[
c^k_{\bm{J}} \prod_{j=1}^J H(q_j-v_j).
\]

Now take the $J$-order derivative of $B(\bm{q})$ with respect to all  $q_j$ in turn.  By Lemma~\ref{lem:pikn}, the highest-degree term  of $B$ in $\bm{q}$   is
\[
c^k_{\bm{J}} \int \left(\prod_{j=1}^{J} H(q_j-v_j)\right) b_k(\bm{v})d\bm{v}
\]
as $c^k_{\bm{J}}\neq 0$ under Assumption~\ref{ass:index}; all other terms have a smaller number of indices $j$. 

This term contributes a cross-derivative
\[
c^k_{\bm{J}} b_k(\bm{q}),
\]
and all other terms generate   zero-value contributions  since each of them is constant in at least one of the directions $j$.

More formally, 
\begin{align}\label{important-derivatives}
T B_k(\bm{q})=\frac{\partial^J B_k(\bm{q})}{\prod_{j\in \bm{J}}\partial q_j}=c^k_{\bm{J}} b_k(\bm{q}).
\end{align}

\medskip

Given Assumptions \ref{ass:Q-continuity}  and \ref{ass:open},   we can apply 
  \eqref{important-derivatives}  successively to the pair of functions  
  \[
B_k(\bm{q})= E[G(Y)D_k\vert\bm{Q}\parZ = \bm{q}] 
\ \ \text{and} \ \
b_k(\bm{v})=E[G(Y_k) \vert \bm{V} =\bm{v}] f_{\bm{V}}(\bm{v}),
\]
 as in \eqref{eq:expandl}, and to the pair of functions
\[
B_k(\bm{q})= \Pr[D=k\vert\bm{Q}\parZ = \bm{q}] 
\ \ \text{with} \ \
b_k(\bm{v})=f_{\bm{V}}(\bm{v}).
\]
The first pair gives us the second equality in the Theorem, and the second pair gives us the first equality. 
$\qed$

{\singlespacing
\bibliographystyle{economet}
\bibliography{citations-multipleTreatments}
}

\newpage

\appendix

\renewcommand\thepage{A-\arabic{page}} \setcounter{page}{1}

\section*{Online Appendices to ``Identifying Effects of Multivalued Treatments''}

 Appendix~\ref{sec:appx-addresults} gives an identification result for the zero-index case, which was not dealt with in the text. It also provides a characterization of Heckman and Pinto's unordered monotonicity property as a subcase of our more general framework. Appendix~\ref{sec:appx-addproofs} collects proofs of  some of the results in the main text. Finally, Appendix~\ref{sec:the_entry_game} fills in the details of the entry game introduced in Section~\ref{sec:model}, and Appendix~\ref{appx:HUV2008}  compares  our results with  those of \cite{HUV2008} in more detail. Appendix \ref{sec:nonrect} discusses a more general form of threshold conditions than  the ``rectangular'' threshold conditions in Assumption~\ref{assumption:selection}.

\section{Additional Results}
\label{sec:appx-addresults}

\subsection{Identification with a Zero Index} 
\label{sub:index-zero-case}

Theorem~\ref{thm:iden} required that  the index of  treatment  $k$ be non-zero (Assumption \ref{ass:index}). It therefore  does not apply to, for instance, Example~\ref{case:example3}. Recall that in that example, 
\[
D_0 = \mathcal{D}_0(\bm{S}) = 1-S_1-S_2-S_3+S_1S_2+S_1S_3+S_2S_3
\]
and treatment 0 has degree $m^0=2<J^0=3$.

Note, however, that steps 1 and 2 of the proof of Theorem~\ref{thm:iden} apply to zero-index treatments as well; the relevant polynomial of Heaviside functions  has leading term
\[
H(q_1-v_1)H(q_2-v_2)+H(q_1-v_1)H(q_3-v_3)+H(q_2-v_2)H(q_3-v_3),
\]
and we can take the derivative in $(q_1,q_2)$ for instance to obtain an equation that replaces ~\eqref{important-derivatives}:
\begin{align*}
\ddpart{}{q_1}{q_2} B_0(\bm{q})=\int b_0(q_1,q_2,v_3) dv_3.
\end{align*}
Applying this to 
 $B_0(\bm{q})=\Pr[D=0 \vert\bm{Q}\parZ = \bm{q}]$ and
$b_0(\bm{v})=f_{\bm{V}}(\bm{v})$, and then to 
 $B_0(\bm{q})= E[ Y D_0 \vert\bm{Q}\parZ = \bm{q}]$ and 
$b_0(\bm{v})=E[G(Y_0) \vert \bm{V} =\bm{v}]f_{\bm{V}}(\bm{v})$, identifies
\[
\int f_{V_1,V_2,V_3}(q_1,q_2,v_3)dv_3=f_{V_1,V_2}(v_1,v_2)
\]
and
\begin{multline*}
\int E[G(Y_0)\vert V_1=q_1,V_2=q_2, V_3=v_3]f_{V_1,V_2,V_3}(q_1,q_2,v_3)dv_3 \\
= E[G(Y_0)\vert V_1=q_1,V_2=q_2] f_{V_1,V_2}(v_1,v_2).
\end{multline*}
Dividing through identifies  a  local counterfactual outcome:
\[
E[G(Y_0)\vert V_1=q_1,V_2=q_2].
\]
Under  Assumption~\ref{ass:global}, this also identifies $EG(Y_0)$.
Moreover, we can apply the same logic to the pairs $(q_1,q_3)$ and $(q_2,q_3)$ to get further information on the treatment effects.

\medskip

This argument applies more generally.  It allows us to state the following theorem:

\begin{theorem}[Identification with a zero index]\label{thm:idenzero}
Let Assumptions \ref{assumption:selection}, \ref{ass:condindZ} and   \ref{ass:continuous} hold. Fix a value  $\bm{q}$ in $\mathcal{\tilde{Q}}$, so that Assumptions~\ref{ass:Q-continuity} and \ref{ass:open}  also hold at $\bm{q}$.
Let $m$ be the degree of treatment $k$.  Take $l$ to be any subset of $\bm{J}$ that corresponds to a leading term in the expansion of the indicator function of $\{D=k\}$.  Denote  $\widetilde{T}$ the differential operator
\[
\widetilde{T}=\frac{\partial^{m}}{\prod_{i=1,\ldots,m} \partial_{l_i}}.
\] 
Then for $\bm{q} = (\bm{q}^{l}, \bm{q}^{\bm{J} - l})$, 	
\begin{align*}
 f_{\bm{V}^{l}}(\bm{q}^{l}) 
&= \frac{1}{c^k_l}
\widetilde{T}\, \Pr[D=k\vert\bm{Q}\parZ = \bm{q}] \\[5mm]
 E[G(Y_k) \vert \bm{V}^{l} =\bm{q}^{l}] 
&= 
\frac{\widetilde{T}\, E[ G(Y)D_k\vert\bm{Q}\parZ = \bm{q}]}
{\widetilde{T}\, \Pr[D=k\vert\bm{Q}(\bm{Z})= \bm{q}]}.
\end{align*}

\end{theorem}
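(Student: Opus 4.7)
The plan is to mimic Step~1--Step~3 of the proof of Theorem~\ref{thm:iden}, replacing the full $J$-fold derivative by the $m$-fold derivative $\widetilde{T}$ along the coordinates indexed by $l$, and then tracking which terms in the polynomial expansion of $\uniset(d_k(\bm{v},\bm{q})=1)$ survive the differentiation.

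\textbf{Step 1 (unchanged).} Exactly as in the proof of Theorem~\ref{thm:iden}, use Assumption~\ref{ass:condindZ} to rewrite
\[
B_k(\bm{q})\equiv E[G(Y)D_k\mid \bm{Q}(\bm{Z})=\bm{q}]
=\int \uniset(d_k(\bm{v},\bm{q})=1)\,b_k(\bm{v})\,d\bm{v}
\]
with $b_k(\bm{v})=E[G(Y_k)\mid \bm{V}=\bm{v}]f_{\bm{V}}(\bm{v})$, and similarly (with $b_k=f_{\bm{V}}$) for the propensity score. Expanding the indicator via Lemma~\ref{lem:pikn} and $S_j=H(Q_j-V_j)$ gives
\[
B_k(\bm{q})=\sum_{l'\in\mathcal{L}} c^k_{l'}\int\Big(\prod_{j\in l'}H(q_j-v_j)\Big)b_k(\bm{v})\,d\bm{v}.
\]

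\textbf{Step 2 (differentiation).} Under Assumption~\ref{ass:Q-continuity} and Assumption~\ref{ass:open}, the same local-equicontinuity argument as in the original proof shows that applying $\partial/\partial q_{l_i}$ for each $i=1,\ldots,m$ in turn is well-defined on a neighborhood of $\bm{q}$, and that it commutes with the outer sum over $l'$. For a single term indexed by $l'$, differentiating once with respect to $q_{l_i}$ either replaces $H(q_{l_i}-v_{l_i})$ by a Dirac mass (if $l_i\in l'$) — which, when integrated, reduces the dimension of the integral — or returns zero (if $l_i\notin l'$, since the term is constant in that direction).

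\textbf{Step 3 (selection of the leading term).} Because $l$ corresponds to a leading term, $|l|=m$, and by Lemma~\ref{lem:pikn} every nonzero $c^k_{l'}$ satisfies $|l'|\leq m$. After applying $\widetilde{T}$, a term $l'$ contributes zero unless $l\subseteq l'$; combined with $|l'|\leq m=|l|$, this forces $l'=l$. Hence only the single term $l'=l$ survives, and its contribution is
\[
\widetilde{T} B_k(\bm{q})=c^k_l\int b_k(\bm{q}^l,\bm{v}^{\bm{J}-l})\,d\bm{v}^{\bm{J}-l}.
\]

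\textbf{Step 4 (conclusion).} Applying this identity first with $b_k=f_{\bm{V}}$ produces $\widetilde{T}\Pr[D=k\mid\bm{Q}(\bm{Z})=\bm{q}]=c^k_l f_{\bm{V}^l}(\bm{q}^l)$, which is the first displayed formula. Applying it with $b_k(\bm{v})=E[G(Y_k)\mid \bm{V}=\bm{v}]f_{\bm{V}}(\bm{v})$ and using the definition of conditional expectation yields $\widetilde{T}E[G(Y)D_k\mid\bm{Q}(\bm{Z})=\bm{q}]=c^k_l E[G(Y_k)\mid \bm{V}^l=\bm{q}^l]f_{\bm{V}^l}(\bm{q}^l)$; dividing gives the second formula. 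The main subtlety, and the only substantive departure from the proof of Theorem~\ref{thm:iden}, is the combinatorial argument in Step~3 which isolates the unique surviving term when $m<J$ and there may be several leading subsets of size $m$ (as in Example~\ref{case:example3}); this is handled simply by the containment argument $l\subseteq l'$ combined with $|l'|\leq m$.
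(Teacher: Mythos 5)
Your proposal is correct and follows essentially the same route as the paper's own proof: reuse Steps 1--2 of Theorem~\ref{thm:iden} without the non-zero-index assumption, then show that applying $\widetilde{T}$ kills every term $l'$ except $l$ itself (your containment argument $l\subseteq l'$ together with $\abs{l'}\leq m$ is logically the same as the paper's case split into $\abs{l'}<m$ and $\abs{l'}=m$, $l'\neq l$). No gaps; the heuristic ``Dirac mass'' phrasing in your Step~2 is harmless since you explicitly defer to the rigorous equicontinuity argument of the original proof.
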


\begin{proof}[Proof of Theorem \ref{thm:idenzero}] 
The proof of Theorem \ref{thm:idenzero} is  basically the same as that of Theorem \ref{thm:iden}. Steps 1 and 2 of the proof of Theorem~\ref{thm:iden} do not rely on any assumption about indices. They show that if we define
\[
W_l(\bm{q})=\int \prod_{j \in  l} H(q_j-v_j) b_k(\bm{v})d\bm{v}
\]
where the set $l\subset \bm{J}$,
then
its cross-derivative  with respect to $(\bm{p}^{l})$ is
\[
 \int b_k(\bm{q}^{l},\bm{v}_{-l})d\bm{v}_{-l},
\]
where 
$\bm{v}_{-l}$ collects all components of $\bm{v}$ whose indices are not in $l$.

Now let $m$ be the degree of treatment $k$. In the sum~\eqref{eq:expandl},  take any term $l$ such that $\abs{l}=m$.   Recall that  $\widetilde{T}$ denotes the differential operator
\[
\widetilde{T}=\frac{\partial^{m}}{\prod_{i=1,\ldots,m} \partial_{j_i}}.
\] 
By the formula above, applying $\widetilde{T}$ to term  $l$ gives
\[
c_l  \int b_k(\bm{q}^{l},\bm{v}_{-l})d\bm{v}_{-l}.
\]
Moreover, applying $\widetilde{T}$ to any other term $l^\prime$ obviously gives zero if term $l^\prime$ has degree less than $m.$  Now take any other  term $l^\prime$ of degree $m$. As $\widetilde{T}$ takes at least one derivative along a direction that is not in $l^\prime$, that term must also contribute zero.

This proves that
\[
\widetilde{T} B_k(\bm{q})=c^k_l  \int b_k(\bm{q}^{l},\bm{v}_{-l})d\bm{v}_{-l};
\]
note that it also implies that $\widetilde{T} B_k(\bm{q})$ only depends on $\bm{q}^{l}$.

Applying this first to $b_k(\bm{v})=f_{\bm{V}}(\bm{v})$ and  $B_k(\bm{q})=\Pr(D=k\vert \bm{Q}\parZ=\bm{q})$, then to
$b_k(\bm{v})=E[G(Y_k) \vert \bm{V} =\bm{v}] f_{\bm{V}}(\bm{v})$ and $B_k(\bm{q})=E[G(Y)D_k\vert\bm{Q}\parZ = \bm{q}]$ exactly as in the proof of Theorem~\ref{thm:iden}, we get
\begin{align*}
	\int f_{\bm{V}}(\bm{q}^{l},\bm{v}_{-l})
	d\bm{v}_{-l}
	&=
	\frac{1}{c^k_l}\widetilde{T}\Pr(D=k\vert \bm{Q}\parZ=\bm{q})\\
	\int E[G(Y_k) \vert \bm{V} =(\bm{q}^{l},\bm{v}_{-l})] f_{\bm{V}}(\bm{q}^{l},\bm{v}_{-l})d\bm{v}_{-l} &=
	\frac{1}{c^k_l}\widetilde{T}E(G(Y)D_k\vert \bm{Q}\parZ=\bm{q}).
\end{align*}
Since the left-hand sides are simply $f_{\bm{V}^{l}}(\bm{v}^{l})$ and  $E[G(Y_k)\vert \bm{V}^{l}=\bm{q}^{l}]f_{\bm{V}^{l}}(\bm{v}^{l}),$ the conclusion of the Theorem follows immediately. $\qed$
\end{proof}

Theorem~\ref{thm:idenzero} is a generalization of Theorem~\ref{thm:iden} (just take $m=J$). It calls for three remarks. First, we could weaken its hypotheses somewhat.  We could for instance replace $(0,1)^{J}$ with  $(0,1)^{m}$ in the statement of  Assumption~\ref{ass:global}.

Second, when $m<J$ the treatment effects are overidentified. This is obvious from the equalities in
Theorem~\ref{thm:idenzero}, in which the right-hand side depends on $\bm{q}$ but the left-hand side only depends on $\bm{q}^{I}$.

 Finally, considering several treatment values can identify even more, since 
$\bm{V}$ is assumed to be the same across $k$. Theorem~\ref{thm:iden} would then imply that if there is any treatment value $k$ with a nonzero index, then the joint density $f_{\bm{V}}$ is identified from that treatment value.

\subsection{Further Analysis of Unordered Monotonicity} 
\label{sub:unordered_monotonicity_more_results}

Our formalism allows us to derive a new characterization of the unordered monotonicity property defined by \cite{heckmanpinto-pdt}. Take any treatment value $k$. In our model, a change in instruments  $\bm{Z}$ acts on the treatment assigned to an observation with unobserved characteristics $\bm{V}$ through the indicator functions $S_j=\uniset(V_j<Q_j(\bm{Z}))$, which depend on the thresholds $\bm{Q}\parZ.$

Unordered monotonicity requires that there exist changes in thresholds $\Delta \bm{Q}$ such that  for $\bm{Q}^\prime=\bm{Q+\Delta Q}$,
\[
\Pr\left\{ d_k(\bm{V},\bm{Q})=0 \mbox{ and } d_k(\bm{V},\bm{Q}^\prime)=1 \right\}
\times
\Pr\left\{ d_k(\bm{V},\bm{Q})=1 \mbox{ and } d_k(\bm{V},\bm{Q}^\prime)=0\right \} =0,
\]
where the probabilities are computed over the joint distribution of $\bm{V}.$

In our framework, several thresholds are  typically relevant for each treatment value. This makes the analysis of unordered monotonicity complex in general. To understand why,   we start from  the expression~\eqref{eq:dkpol} of $D_k$ as a polynomial of $\bm{S}=(S_1,\ldots,S_J)$ for $S_j(\bm{V},\bm{Q})=\uniset(V_j < Q_j)$.
For any change in thresholds $\Delta \bm{Q}$ that induces changes in the indicators $\Delta \bm{S},$  
Taylor's theorem yields
\begin{equation}
		\label{eq:expa2WF}
    \Delta D_k =\sum_{m=1}^J 
    \sum_{\alpha_1 + \ldots + \alpha_J = m} \frac{1}{ \alpha_1! \alpha_2! \cdots \alpha_J!}  \frac{\partial^m \mathcal{D}_k (\bm{S})}{\partial S_1^{\alpha_1} \partial S_2^{\alpha_2} \ldots\partial S_J^{\alpha_J}} \prod_{l=1}^J \Delta S_{l}^{\alpha_l},
\end{equation}
where $\alpha_j$ is a nonnegative integer for $j=1,\ldots,J$.
Note that this is an exact expansion since $\mathcal{D}_k$ is a polynomial.	Moreover,  note that given a change in one threshold $\Delta Q_j,$ only $S_j$ changes and 
  \begin{equation}
  		\label{eq:deltaSj}
    \Delta S_j=\uniset(0 < V_j-Q_j < \Delta Q_j) - \uniset(\Delta Q_j < V_j-Q_j < 0).
   \end{equation}
(We do not need to distinguish between the weak and strict inequalities since 
	the distribution of $V_j$ is absolutely continuous with respect to the Lebesgue measure.)

The changes $\Delta S_j$ can only take the values 0 or $\pm 1$. In general higher-order terms in expansion~\ref{eq:expa2WF} may be nonzero. However, if the changes in thresholds $\Delta \bm{Q}$ are small then we can neglect the higher order terms since  the  values of $\bm{V}$ for which several $\Delta S_j$ are nonzero  occur with very small probability. To make this more precise, we use the following definition:

\begin{definition}[Two-Way Flows]\label{def:first-order-flows}
	A change in thresholds $\Delta\bm{Q}$ generates two-way flows  for treatment value $k$ if and only if
	\[
	\lim_{\varepsilon\to 0}
	\left(
	\frac{\Pr\left(D_k(0)=0 \mbox{ and } D_k(\varepsilon)=1\right)}{\varepsilon}
	\times
	\frac{\Pr\left(D_k(0)=1 \mbox{ and } D_k(\varepsilon)=0\right)}{\varepsilon}
	\right) > 0
	\]
	for $D_k(\varepsilon)\equiv d_k(\bm{V},\bm{Q}+\varepsilon\Delta\bm{Q}).$
\end{definition}

We now provide new characterizations of unordered monotonicity. 
 
\begin{theorem}[Characterizing Unordered Monotonicity in the Small]\label{thm:2WF}
        Fix a value $\bm{Q}$  of the thresholds. Denote
	\[
	\nabla\mathcal{D}_k(\bm{S})=\dpart{\mathcal{D}_k}{\bm{S}}(\bm{S}).
	\] 
	Assume that $J\geq 2$ and that there exist two values $j_1\neq j_2$ such that $\nabla_{j_1}\mathcal{D}_k$ and $\nabla_{j_2}\mathcal{D}_k$ are not identically zero. Then:
\begin{enumerate}
	\item If each component of $\nabla\mathcal{D}_k(\bm{S})$ has a constant sign when $\bm{S}$ varies over $\{0,1\}^{J}$, then some changes in thresholds  do not generate  two-way flows, and some others do.
	\item If the sign of any component $\nabla_j\mathcal{D}_k(\bm{S})$ changes  when $S_j$ switches between~0 and~1, then any change in thresholds  generates two-way flows.
\end{enumerate}	
(In these two statements,  we take 0 to have the same sign as  both $-1$ and  $+1$.)
\end{theorem}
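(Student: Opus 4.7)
The strategy hinges on two observations. First, $\mathcal{D}_k$ is multilinear in $\bm{S}$, since each monomial in \eqref{eq:dkpol} is a product of distinct $S_j$ and $S_j^2 = S_j$. Consequently $\nabla_j \mathcal{D}_k(\bm{S})$ depends only on $\bm{S}_{-j}$, and a single-coordinate flip yields $\Delta D_k = \operatorname{sgn}(\Delta Q_j)\,\nabla_j \mathcal{D}_k(\bm{S}_{-j})$ exactly. Second, by Assumption \ref{ass:continuous}, for each $j$ the event $A_j(\varepsilon) = \{V_j \text{ strictly between } Q_j \text{ and } Q_j + \varepsilon\Delta Q_j\}$ has probability $f_{V_j}(Q_j)|\Delta Q_j|\,\varepsilon + o(\varepsilon)$, while any pairwise intersection $A_j(\varepsilon) \cap A_{j'}(\varepsilon)$ has probability $O(\varepsilon^2)$. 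So to leading order in $\varepsilon$ only single-coordinate flips matter.

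Combining these, I would show
\[
\lim_{\varepsilon \to 0^{+}} \frac{\Pr\bigl(D_k(0)=0,\,D_k(\varepsilon)=1\bigr)}{\varepsilon}
= \sum_{j=1}^{J} f_{V_j}(Q_j)|\Delta Q_j|\,\Pr\!\Bigl(\operatorname{sgn}(\Delta Q_j)\,\nabla_j \mathcal{D}_k(\bm{S}_{-j}) = +1 \,\Big|\, V_j = Q_j\Bigr),
\]
and symmetrically for the $1 \to 0$ direction. Two-way flows in the sense of Definition \ref{def:first-order-flows} are then present precisely when both one-sided rates have a strictly positive summand.

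For part~1, each $\nabla_j \mathcal{D}_k$ has a constant sign $\sigma_j \in \{-1,0,+1\}$ over $\{0,1\}^{J-1}$. Aligning $\operatorname{sgn}(\Delta Q_j)$ with $\sigma_j$ for every $j$ (with an arbitrary choice when $\sigma_j = 0$) kills every $1 \to 0$ summand, ruling out two-way flows. Conversely, the theorem's standing hypothesis supplies two indices $j_1 \neq j_2$ with $\sigma_{j_1},\sigma_{j_2} \neq 0$; picking $\operatorname{sgn}(\Delta Q_{j_1}) = \sigma_{j_1}$ and $\operatorname{sgn}(\Delta Q_{j_2}) = -\sigma_{j_2}$ then generates simultaneous flows in both directions. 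For part~2, full support of $\bm{V}$ implies every $\bm{S}_{-j} \in \{0,1\}^{J-1}$ carries strictly positive conditional probability given $V_j = Q_j$; if $\nabla_j \mathcal{D}_k$ takes both signs as $\bm{S}_{-j}$ varies, then for every $\Delta Q_j \neq 0$ the $j$-th summand contributes to both one-sided rates, so any nontrivial change in thresholds produces two-way flows.

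The main technical obstacle is rigorously justifying the reduction to single-coordinate flips: the contribution of events on which two or more coordinates simultaneously flip must be $o(\varepsilon)$ in both transition rates. This follows from boundedness of the joint density $f_{\bm{V}}$ on $[0,1]^J$ (yielding $\Pr(A_j(\varepsilon) \cap A_{j'}(\varepsilon)) \le C\varepsilon^2$), together with the fact that the higher-degree terms in $\mathcal{D}_k(\bm{S}+\Delta\bm{S}) - \mathcal{D}_k(\bm{S})$ each carry explicit products of at least two $\Delta S_j$'s. A secondary delicacy is the sign convention in the statement: treating $0$ as having both signs is exactly what lets directions with $\sigma_j = 0$ be assigned either sign of $\Delta Q_j$ in part~1 without affecting the argument.
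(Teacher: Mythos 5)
Your proposal is correct and follows essentially the same route as the paper's proof: a first-order (multilinear) expansion of $\mathcal{D}_k$ in the indicator flips $\Delta S_j$, the observation that simultaneous flips of two or more coordinates occur with probability $O(\varepsilon^2)$ and hence do not affect the rates in Definition~\ref{def:first-order-flows}, followed by the same sign-alignment construction for part~1 and sign-variation argument for part~2. Your explicit limit formula for the one-sided transition rates is just a more quantitative packaging of the paper's $\Delta D_k \simeq \sum_j \nabla_j\mathcal{D}_k(\bm{S})\,\Delta S_j$; the substance is identical.
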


\begin{proof}[Proof of Theorem \ref{thm:2WF}] 

Take $\varepsilon>0$ small. Remember that given a change in thresholds $\varepsilon \Delta Q_j,$
    \[
    \Delta S_j=\uniset(0 < V_j-Q_j < \varepsilon\Delta Q_j) - \uniset(\varepsilon\Delta Q_j < V_j-Q_j < 0),
	\]
	which is zero or has the sign of $\Delta Q_j$.
	
	Under our assumptions on the distribution of $\bm{V}$, the probability that $\Delta S_j\neq 0$ is of order $\varepsilon$;  the probability that 
	 $\Delta S_j\Delta S_l\neq 0$ is of order $\varepsilon^2$, etc. Given Definition~\ref{def:first-order-flows}, we only need to work on the first-order terms in expansion~\eqref{eq:expa2WF} since the other terms generate vanishingly small corrections. That is, we use
	\begin{align}
						\label{eq:deltaDk}
		\Delta D_k & \simeq \sum_{j=1}^J \nabla_j\mathcal{D}_k(\bm{S})\times \Delta S_j  
		\\
		& = \sum_{j=1}^J \nabla_j\mathcal{D}_k(\bm{S})\times \left(\uniset(0 < V_j-Q_j < \varepsilon\Delta Q_j) - \uniset(\varepsilon\Delta Q_j < V_j-Q_j < 0)\right). \nonumber
	\end{align}

	\begin{itemize}
		\item {\em Proof of part 1:}
		
		To prove part 1 of the theorem, assume that each derivative $\nabla_j\mathcal{D}_k$  has a constant sign, independent of $\bm{S}\in \{0,1\}^{J}$. 
		
		Then it is easy to find changes $\Delta\bm{Q}$ that only generate one-way flows. First take each $\Delta Q_j$ to have the sign of $\nabla_j\mathcal{D}_k$.
		
		 Since each $\Delta S_j$ has the sign of the corresponding $\Delta Q_j$, each product term in the sum~\eqref{eq:deltaDk} is non-negative, and so is the change in $D_k$. Obviously, changing the sign of all $\Delta Q_j$'s would generate one-way flows in the opposite direction.

		It is equally easy to find changes in instruments that generate two-way flows. Take the indices $j_1$ and $j_2$ referred to in the statement of the theorem.  Take $\Delta Q_m=0$ for $m\neq j_1,j_2$.  Then  expansion~\eqref{eq:deltaDk} becomes
		\[
		\Delta D_k \simeq\nabla_{j_1}\mathcal{D}_k(\bm{S})\times \Delta S_{j_1} +
		\nabla_{j_2}\mathcal{D}_k(\bm{S})\times\Delta S_{j_2}.
\]
		Choose some $\Delta Q_{j_1}, \Delta Q_{j_2}\neq 0$ such that 
		\[
		\nabla_{j_1}\mathcal{D}_k(\bm{S})\times\Delta Q_{j_1} \mbox{ and } \nabla_{j_2}\mathcal{D}_k(\bm{S})\times\Delta Q_{j_2}
		\]
		have opposite signs (which do not vary with $\bm{S}$ by assumption).
		
		Take $\abs{V_{j_1}-Q_{j_1}}$ small  and $\abs{V_{j_2}-Q_{j_2}}$ not small, so that $\Delta S_{j_1}$ has the sign of $\Delta Q_{j_1}$ and $\Delta S_{j_2}=0$; then  $\Delta D_k$ has the sign of  $\nabla_{j_1}\mathcal{D}_k(\bm{S})\times\Delta Q_{j_1}$. Permuting  ${j_1}$ and ${j_2}$ generates the opposite sign; therefore such a change in thresholds generates two-way flows.
			\item {\em Proof of part 2:}

			To prove part 2 of the theorem, take $j$ such that $\nabla_{j}\mathcal{D}_k$ changes sign when  the sign of $V_j-Q_j$ changes (so that $S_j$ switches between~0 and~1). Let $\Delta Q_m=0$ for all $m\neq j$, so that
			\[
			\Delta D_k\simeq \nabla_{j}\mathcal{D}_k(\bm{S})\times \Delta S_j.
			\]
			 By the assumption in part 2, the sign of $\Delta D_k$ is the sign of $\Delta S_j$ for some values of $\bm{V}$ and the opposite sign for other values.  Take any change in the threshold $\Delta Q_j$. Since $\Delta S_j$ is zero or has the sign of $\Delta Q_j$, 
			$\Delta D_k$ must take opposite values as $\bm{V}$ varies. $\qed$
	\end{itemize}

\end{proof}

To illustrate the theorem, first consider  
the double hurdle model, for which  $\nabla\mathcal{D}_1(\bm{S})=(S_2,S_1) \geq 0.$ 
This case is covered by part 1 of Theorem~\ref{thm:2WF}.
Changes such that $\Delta Q_1$ and $\Delta Q_2$ have the same sign do not generate two-way flows, but changes that generate $\Delta Q_1\Delta Q_2<0$ do. 

Now turn to the model of Example~\ref{case:example1}, where 
 $\nabla\mathcal{D}_2(\bm{S})=(1-2S_2, 1-2S_1).$ This corresponds to part~2 of the Theorem, since the sign of $(1-2s)$ depends on $s=0,1$. Using the expansion~\eqref{eq:deltaDk} gives, with $j_1=1, j_2=2$:
 \[
 \Delta D_2 \simeq (1-2S_2)\times\Delta S_{1} +
 (1-2S_1)\times\Delta S_{2}.
 \]
Depending on the values of $\bm{V}$ and therefore of $S_1$ and $S_2$, this can be
 \[
\Delta S_1+\Delta S_2, \Delta S_1-\Delta S_2, \Delta S_2-\Delta S_1, \mbox{ or }-\Delta S_1- \Delta S_2.
 \]
  To get one way flows only, we would need to chaneg thresholds to induce  $\Delta S_1, \Delta S_2=\pm 1$ such that the four numbers above have 
the same sign. But that is clearly impossible.
 Hence \emph{any}  change in instruments creates two-way flows.

 \section{Additional Proofs}
 \label{sec:appx-addproofs}
 
 \subsection{Proof of Corollary \ref{iden:ATE-ATT}}\label{appx:proof-iden-ATE-ATT}

First consider the average treatment effect. 
Under Assumption~\ref{ass:global}, we have that 
\[
E G(Y_k)=\int E\left(G(Y_k) \vert \bm{V} =\bm{v}\right) f_{\bm{V}}(\bm{v}) d\bm{v},
\]
which implies \eqref{iden-eq-ATE} immediately.

Now consider $E[G(Y_k) - G(Y_\ell)|D = k]$. 
Note that 
\begin{align*}
& E[G(Y_k) - G(Y_\ell)|D = k, \bm{Q}\parZ = \bm{q}] \\
&= 
E[G(Y_k) - G(Y_\ell)\vert d_k(\bm{V}, \bm{q}) =1 ] \\
&= \frac{\int \uniset\left(d_k(\bm{v},\bm{q}) =1\right) E[G(Y_k) - G(Y_\ell) \vert \bm{V} =\bm{v}] f_{\bm{V}}(\bm{v}) d\bm{v}}{\int \uniset\left(d_k(\bm{v},\bm{q}) =1\right) f_{\bm{V}}(\bm{v}) d\bm{v}}.
\end{align*}
Thus,
\begin{align*}
& E[G(Y_k) - G(Y_\ell)\vert D = k] \\
&= E E[G(Y_k) - G(Y_\ell)\vert D = k, \bm{Q}\parZ] ] \\
&= \int \frac{\int \uniset\left(d_k(\bm{v},\bm{q}) =1\right) E[G(Y_k) - G(Y_\ell) \vert \bm{V} =\bm{v}] f_{\bm{V}}(\bm{v}) d\bm{v}}{\int \uniset\left(d_k(\bm{v},\bm{q}) =1\right) f_{\bm{V}}(\bm{v}) d\bm{v}}
d F_{\bm{Q}\parZ| D} (\bm{q}|k).
\end{align*}
By Bayes' rule, we have that
\[
d F_{\bm{Q}\parZ| D} (\bm{q}|k) = \frac{\Pr[D=k|\bm{Q}\parZ = \bm{q}]}{\Pr(D=k)} d F_{\bm{Q}\parZ} (\bm{q}).
\]
Since 
\[
\Pr[D=k|\bm{Q}\parZ = \bm{q}] = \int \uniset\left(d_k(\bm{v},\bm{q}) =1\right) f_{\bm{V}}(\bm{v}) d\bm{v},
\]
we have that
\begin{align*}
& E[G(Y_k) - G(Y_\ell)|D = k] \\
&= \int \frac{\int \uniset\left(d_k(\bm{v},\bm{q}) =1\right) E[G(Y_k) - G(Y_\ell) \vert \bm{V} =\bm{v}] f_{\bm{V}}(\bm{v}) d\bm{v}}{\Pr(D=k)}
d F_{\bm{Q}\parZ} (\bm{q}) \\
&=  \frac{\int \Pr \left(d_k(\bm{v},\bm{Q}\parZ) =1\right) E[G(Y_k) - G(Y_\ell) \vert \bm{V} =\bm{v}] f_{\bm{V}}(\bm{v}) d\bm{v}}{\Pr(D=k)} \\
&= \int  \Delta_{\text{MTE}}^{(k, \ell)} (\bm{v})  \omega^k_{\text{TT}}(\bm{v})  d\bm{v}.
\end{align*}

We now move to the identification of the policy relevant treatment effects. 
Recall that in the proof of Theorem 3.1 (see equation \eqref{eq:qandp}), we have that 
\begin{align*}
& \lefteqn{ E[ G(Y) D_k\vert\bm{Q}\parZ = \bm{q}] }\nonumber\\
&=  \int \uniset\left(d_k(\bm{v},\bm{q}) =1\right) E[G(Y_k) \vert \bm{V} =\bm{v}] f_{\bm{V}}(\bm{v}) 
d\bm{v}.
\end{align*}
Since $G(Y)  = \sum_{k \in \mathcal{K}} G(Y) D_k$,  
we then have that
\begin{align*}
E[G(Y)] 
&= \sum_{k \in \mathcal{K}} E[ E[ G(Y) D_k\vert\bm{Q}\parZ = \bm{q}] ] \\
&=  \sum_{k \in \mathcal{K}} \int \Pr[ d_k(\bm{v},\bm{Q}\parZ) =1 ] E[G(Y_k) \vert \bm{V} =\bm{v}] f_{\bm{V}}(\bm{v}) 
d\bm{v}.
\end{align*}
Similarly, we have that
\begin{align*}
E[D] 
&= \sum_{k \in \mathcal{K}} k E[ E[ D_k\vert\bm{Q}\parZ = \bm{q}] ] \\
&=  \sum_{k \in \mathcal{K}} k \int \Pr[ d_k(\bm{v},\bm{Q}\parZ) =1 ] f_{\bm{V}}(\bm{v}) 
d\bm{v}
\end{align*}
and that 
\begin{align*}
E[D_k = 1] 
&=  E[ E[ D_k\vert\bm{Q}\parZ = \bm{q}] ] \\
&=   \int \Pr[ d_k(\bm{v},\bm{Q}\parZ) =1 ] f_{\bm{V}}(\bm{v}) 
d\bm{v}.
\end{align*}
Then the desired results follow immediately since the new policy only changes  $\bm{Q}$ to $\bm{Q^\ast}$, while everything else remains the same.
$\qed$

\subsection{Proof of Theorem \ref{pro:Q-two-way-flows}}\label{appendix:Q-two-way-flows}

It follows from \eqref{two-way-flow-propensity-score} on  page~\pageref{page:defEx1} that 
\begin{align}\label{two-way-flow-propensity-score-more}
Q_1\parZ +Q_2\parZ &= 2 P_0\parZ +  P_2\parZ.
\end{align}
The right hand side of \eqref{two-way-flow-propensity-score-more}
is identified directly from the data. Suppose that  
 $\tilde{Q}_1\parZ$ and $\tilde{Q}_2\parZ$ also satisfy  $\tilde{Q}_1\parZ + \tilde{Q}_2\parZ = 2 P_0\parZ +  P_2\parZ$, as well as Assumption~\ref{ass:Q:iden-two-way-flows}.
Then writing $\Delta_j\parZ = Q_j\parZ - \tilde{Q}_j\parZ$ $(j=1,2)$ gives
$\Delta_1\parZ =- \Delta_2\parZ$. 
But by Assumption~\ref{ass:Q:iden-two-way-flows}, $\Delta_1$ does not depend on $Z_2$, and $\Delta_2$ does not depend on $Z_1$. Therefore 
we must have $\tilde{Q}_1(Z_1) = Q_1(Z_1) + C$ and $\tilde{Q}_2(Z_2) = Q_2(Z_2) - C$,
where $C$ is a constant. 
This proves that  $Q_1$ and $Q_2$ are identified up to an additive constant.

Further,  take any $(z_1^0,z_2^0)\in \mathcal{Z}$.
If we take $Q_2(z_2) = P(z_1^0,z_2)-C_1^0$ for some constant $C_1^0$, then  by \eqref{two-way-flow-propensity-score-more},
\begin{align}\label{iden-q1}
Q_1(z_1) = P(z_1,z_2)-P(z_1^0,z_2)+C_1^0.
\end{align}
Since the right-hand side of \eqref{iden-q1} should not depend on $z_2$, we set 
	\begin{align*}
		Q_1(z_1) &= P(z_1,z_2^0)-P(z_1^0,z_2^0)+C_1^0\\
		Q_2(z_2) &= P(z_1^0,z_2)-C_1^0.
	\end{align*}
To describe the possible range of $C_1^0$, 
note that we require that 
\begin{align*}
\Pr (D = 0) &= \Pr[ Q_1(Z_1) > 0 \text{ and } Q_2(Z_2) > 0 ] > 0, \\
\Pr (D = 1) &= \Pr[ Q_1(Z_1) < 1 \text{ and } Q_2(Z_2) < 1 ] > 0, \\
\Pr (D = 2) &= \Pr[ Q_1(Z_1) > 0 \text{ and } Q_2(Z_2) < 1 ] +
\Pr[ Q_1(Z_1) <1  \text{ and } Q_2(Z_2) > 0 ] > 0.
\end{align*}
That is, $C_1^0$ must satisfy the following restrictions:
\begin{align*}
\begin{split}
& \Pr[ P(z_1^0,z_2^0) - P(Z_1,z_2^0) < C_1^0 < P(z_1^0,Z_2) ] > 0, \\
& \Pr[ P(z_1^0,Z_2) - 1 < C_1^0 < 1 + P(z_1^0,z_2^0) - P(Z_1,z_2^0) ] > 0, \\
& \Pr \Big[ \max \{ P(z_1^0,z_2^0) - P(Z_1,z_2^0), P(z_1^0,Z_2) - 1 \} < C_1^0 \Big ] \\
&+
\Pr \Big[  C_1^0 < \min \{ 1 + P(z_1^0,z_2^0) - P(Z_1,z_2^0), P(z_1^0,Z_2) \}  \Big] > 0.
\end{split}
\end{align*}

$\qed$

\subsection{Proof of Theorem~\ref{pro:globident2hurdle}} 
\label{sub:proof_of_theorem_ref_pro_globident2hurdle}

Recall that we denote $H(z_1,z_2)=\Pr(D=1\vert Z_1=z_1,Z_2=z_2)$ the  propensity score. Under our exclusion restrictions, $H(z_1,z_2)=F_{V_1,V_2}(G_1(z_1),G_2(z_2))$.

Let $f_{\bm{V}} (v_1, v_2)$ denote the density of $\bm{V}=(V_1,V_2)$.	By construction, 
\begin{align}\label{me1}	
H(z_1,z_2)
=F_{\bm{V}}(G_1(z_1), G_2(z_2))
= \int_{0}^{G_1(z_1)} \int_{0}^{G_2(z_2)} f_{\bm{V}} (v_1, v_2) dv_1 dv_2.
\end{align}
Differentiating both sides of \eqref{me1} with respect to $z_1$ gives	
\begin{align}\label{me2}
\frac{\partial H}{\partial z_1}(z_1,z_2)
= G^\prime_1(z_1)\int_{0}^{G_2(z_2)} f_{\bm{V}} (G_1(z_1), v_2) dv_2.
\end{align}	
Now letting $z_2 \rightarrow b_2$ on  both sides of  \eqref{me2} yields
\begin{align}\label{me3}
\lim_{z_2 \rightarrow b_2} \frac{\partial H}{\partial z_1}(z_1,z_2)
= G^\prime_1(z_1)\left[ \lim_{z_2 \rightarrow b_2} \int_{0}^{G_2(z_2)} f_{\bm{V}} (G_1(z_1), v_2) dv_2 \right]. 
\end{align}		
The expression inside the brackets on the right side of 
\eqref{me3} is 1 
since $\lim_{z_2 \rightarrow b_2} G_2(z_2) = 1$
and the marginal distribution of $V_2$ is $U[0,1]$. 	
Therefore we identify $G_1$ by
\begin{align}\label{me4}
G_1(z_1) = \int_{a_1}^{z_1} \lim_{t_2 \rightarrow b_2} \frac{\partial H}{\partial z_1}(t_1,t_2) dt_1.
\end{align}	
Analogously, we 	
identify $G_2$ by
\begin{align}\label{me4-3}
G_2(z_2) = \int_{a_2}^{z_2} \lim_{t_1 \rightarrow b_1} \frac{\partial H}{\partial z_2} (t_1,t_2)dt_2.
\end{align}	
Returning to \eqref{me1},	since $G_1$ and $G_2$ are strictly increasing 
we  identify $F_{\bm{V}}$ by 
\begin{align*}
F_{\bm{V}}(v_1, v_2) = 
H(G_1^{-1}(v_1), G_2^{-1}(v_2)).
\end{align*}
 $\qed$

\subsection{Proof of Theorem~\ref{pro:locident2hurdle}} 
\label{sub:proof_of_theorem_ref_pro_locident2hurdle}

\subsubsection{Proof of part 1} 
\label{ssub:proof_of_1}

Given our differentiability assumptions,  we can  take derivatives of the formula
\begin{equation}
	\label{eq:phiphi}
	 \phi\left(H(z_1,z_2)\right)=\phi(G_1(z_1))+\phi(G_2(z_2))
\end{equation}
over $\mathcal{N}$. 
Using 
 \[
 \ddpart{\left(\phi \circ H\right)}{z_1}{z_2}(z_1,z_2)=0,
 \]
we obtain
 \[
\phi^{\prime\prime}(h) \dpart{H}{z_1}(z_1,z_2)\dpart{H}{z_2}(z_1,z_2) +\phi^{\prime}(h) \ddpart{H}{z_1}{z_2}(z_1,z_2)=0
 \]
 with $h=H(z_1,z_2)$.
 
 Take any smooth curve  contained in $\mathcal{N}$   and parameterize it as $h \to (z_1(h),z_2(h))$ with $h=H(z_1(h), z_2(h))$; then we have a differential equation
\begin{equation}
	\label{eq:dphi}
 \phi^{\prime\prime}(h) \dpart{H}{z_1}(z_1(h),z_2(h))\dpart{H}{z_2}(z_1(h),z_2(h)) 
 +\phi^{\prime}(h) \ddpart{H}{z_1}{z_2}(z_1(h),z_2(h)) = 0.
\end{equation}
  Using  \eqref{eq:phiphi}, the partial derivatives $H_1$ and $H_2$ cannot take the value zero on $\mathcal{N}$ since $G^\prime_1$ and $G^\prime_2$ are never zero. Therefore we can  rewrite~\eqref{eq:dphi} as
 \[
 \frac{\phi^{\prime\prime}}{\phi^{\prime}}(h) = - \frac{H_{12}}{H_1 H_2}(z_1(h), z_2(h))
 \]
over $\mathcal{N}$. 

 We note that this equation incorporates a sign constraint and  overidentifying restrictions. For $\phi$ to be strictly decreasing and convex, we require $H_{12}/(H_1H_2)\geq 0$. Moreover, on any admissible curve the ratio 
$H_{12}/(H_1H_2)$
must be  the same function of $h$, which we denote $R(h)$. 
$\qed$


\medskip

\subsubsection{Proof of part 2} 
\label{ssub:proof_of_2}

From now on  we denote $(\underline{h},\overline{h})\subset (0,1)$ the image of   $\mathcal{N}$  by $H$.

We   use the fact that 
  $\partial \log(-\phi'(h))/\partial h = {\phi^{\prime\prime}(h)}/{\phi^{\prime}(h)}$ to obtain
 \[
 \log \left(-\phi^\prime(h)\right)=\int_{h}^{\bar{h}} R(t) dt+ \log \left(-\phi^\prime(\bar{h})\right),
 \] 
 so that
 \[
 \phi^\prime(h)=\phi^\prime(\bar{h})\exp\left(\int_{h}^{\bar{h}} R(t) dt\right).
 \]
Denoting 
 \[
 \mathbb{T}(h):=\int_{h}^{\bar{h}} dk \exp\left(\int_{k}^{\bar{h}} R(t) dt\right)
 \]
gives us $\phi(h)=\phi(\bar{h})-\phi^\prime(\bar{h})\mathbb{T}(h).$ Note that by construction $\mathbb{T}$ is a decreasing function and $\mathbb{T}(\bar{h})=0.$  Moreover, $\phi^\prime(\bar{h})$ cannot be zero since $\phi$ would be constant.
$\qed$

 
 \subsubsection{Proof of part 3} 
 \label{ssub:proof_of_3}

If $\phi$ solves \eqref{eq:phiphi} then clearly so does $\alpha\phi$ for any $\alpha>0$;  we normalize   $\phi^\prime(\bar{h})=-1$. 
Hence, from now on, $\phi(h)=\phi(\bar{h}) - \mathbb{T}(h).$
 The constant $\phi(\bar{h})$  must be non-negative since $\phi$ cannot take negative values. Moreover, since $\phi$ is convex, $\phi^\prime(\bar{h})=-1$, and $\phi(1)=0$, we must have $\phi(\bar{h})\leq 1-\bar{h}.$
 If moreover $\bar{h}=\sup_{\bm{z}\in\mathcal{N}}\Pr(D=1\vert \bm{Z}=\bm{z})=1$, then $\phi(\bar{h})=\phi(1)=0$; this defines directly $\phi(h)=-\mathbb{T}(h)$ over $(\underline{h},1)$.
$\qed$
 
 
 \subsubsection{Proof of part 4} 
 \label{ssub:proof_of_4}

Since the model is well-specified, there is a solution $G_1, G_2$ (the thresholds of the true DGP).
In addition, 
 since any other admissible $(\tilde{G}_1,\tilde{G}_2)$ must satisfy
	\[
	\phi(\tilde{G}_1(z_1))+\phi(\tilde{G}_2(z_2))=\phi(H(z_1,z_2))=\phi(G_1(z_1))+\phi(G_2(z_2))
	\]
	on $\mathcal{N}$, it must be that
	\begin{align*}
		\phi(\tilde{G}_1(z_1)) &= \phi(G_1(z_1))-k \\
		\phi(\tilde{G}_2(z_2)) &= \phi(G_2(z_2))+k
	\end{align*}
for some constant $k$. Any such constant must be such that 
$\phi(G_1(z_1))-k$ and $\phi(G_2(z_2))+k$ are both nonnegative for all $z_1$ and $z_2$ in the projections of $\mathcal{N}$. That is,
	\[
	-\inf \phi(G_2(z_2)) \leq k \leq \inf \phi(G_1(z_1)).
	\]
  If moreover $\sup_{\bm{z}\in \mathcal{N}} \Pr(D=1\vert \bm{Z}=\bm{z})=1$, then $\bar{h} = 1$.  Take a sequence $(\bm{z}_{n})$ such that $H(\bm{z}_n)$ converges to $\bar{h}=1$. Then $\phi(H(\bm{z}_n))$ converges to zero, so that both $\phi(G_1(z_{1n}))$ and $\phi(G_2(z_{2n}))$  must converge to zero. The double inequality above  implies that $k=0$, and 
  $G_1$ and $G_2$ are point-identified on the projections of $\mathcal{N}$. 
$\qed$


\section{The Entry Game} 
\label{sec:the_entry_game}

Let us return to Example~\ref{case:example2}, in which 
two firms $j=1,2$ are considering entry into a new market. Firm $j$ has profit $\pi_j^m$ if it becomes a monopoly, and $\pi^d_j<\pi^m_j$ if both firms enter. We saw that if  $\pi_j^m>0>\pi_j^d$ for both firms, then there are two symmetric equilibria, with only one firm operating. Now assume that we observe not only the number of entrants as in Example~\ref{case:example2}, but also their identity.  With profits 
given by  $\pi_j^m=V_j-Q_j(\bm{Z})$ and $\pi_j^d=\bar{V}_j-\bar{Q}_j(\bm{Z})$,  if only firm~1 entered then we know that $\pi_1^m>0$ and $\pi_2^d<0$, so that
\[
V_1>Q_1(\bm{Z})  \; \mbox{ and } \; \bar{V}_2<\bar{Q}_2(\bm{Z}).
\]
That still leaves two possible cases:
\begin{enumerate}
	\item $\pi^m_2<0$,  and the unique equilibrium has only firm 1 entering the market;
	\item and $\pi^m_2>0$, and there is another, symmetric equilibrium with only firm 2 entering.
\end{enumerate}
Now let us postulate an equilibrium selection rule that has a threshold structure: when both $\pi^m_1$ and $\pi^2_m$ are positive, firm~1 is selected to be the unique entrant if and only if $U<q\parZ$. Then the necessary and sufficient set of conditions for the entry of firm~1 only is
\[
	  V_1>Q_1(\bm{Z}) \mbox{ and } \left(V_2<Q_2(\bm{Z}) \mbox{ or } \left(\bar{V}_2<\bar{Q}_2(\bm{Z})
	  \mbox{ and } U<q\parZ\right)\right).
\]
This is again a special case of the general framework we analyze in this paper.

\section{Detailed Discussion of \cite{HUV2008}}
\label{appx:HUV2008}

 \cite{HUV2008}  consider a multinomial discrete choice model for treatment. 
 They posit 
\[
D=k \iff R_k(\bm{Z})-U_k > R_l(\bm{Z})-U_l \mbox{ for } l=0, \ldots,K-1  
 \text{ such that $l \neq k$},
\]
where the $U$'s are continuously distributed and independent of $\bm{Z}.$

Define 
\[
\bm{R}(\bm{Z})=\left(R_k\parZ-R_l\parZ\right)_{l\neq k}  \; \mbox{ and } \; 
\bm{U}=\left(U_k-U_l\right)_{l\neq k}. 
\]
Then $D_k=\uniset(\bm{R}(\bm{Z})> \bm{U})$; and defining
 $Q_l(\bm{Z})=\Pr[ \bm{U}_l <\bm{R}_l(\bm{Z})\vert\bm{Z}]$
 allows us to write  the treatment model as
\begin{align}
D=k  \; \mbox{ iff } \; \bm{V} <\bm{Q}\parZ,
\end{align}
where each  $V_l$ is distributed as $U[0,1].$

The applications they consider are GED certification (with three treatments: permanent high school dropout, GED, high school degree) and randomized trials with imperfect compliance (for example, no training, classroom training, and job search assistance).

They then study the identification of marginal and local average treatment effects under assumptions that are similar to ours: continuous instruments that generate enough dimensions of variation in the thresholds. 
 They assume that $\bm{V}$ is continuously distributed with full support; that $(\bm{U},\bm{V})\Perp Z$; and that all treatments have positive probabilities. More importantly, they make either
\begin{itemize}
	\item assumption (a): for each treatment $j$, there is a component of $\bm{Z}$ that drives some variation in $R_j$ conditional on the other components, and in $R_j$ only;
	\item assumption (b): for each treatment $j$, there is a component of $\bm{Z}$ that drives continuous variation in $R_j$ conditional on the other components, and no variation in the other components of $R$.
\end{itemize}

For any subset of treatments $\mathcal{J} \subset \mathcal{K}$, they define  $Y_{\mathcal{J}}$ to be the outcome when the agent chooses the best treatment from $\mathcal{J}$. 
They also define $\Delta_{\mathcal{J},\mathcal{L}}=Y_{\mathcal{J}}-Y_{\mathcal{L}}$, and in particular the MTE
\[
E \left(\Delta_{\mathcal{J},\mathcal{L}} \vert \bm{Z}, R_{\mathcal{J}}(\bm{Z})=R_{\mathcal{L}}(\bm{Z})\right).
\]
They show that
\begin{itemize}
	\item if we take $\mathcal{J}=\left\{j\right\}$ and $\mathcal{L}=\mathcal{K}-\left\{j\right\}$, then the LATE is identified under (a) and the MTE is identified under (b);
	\item if we take any $\mathcal{J}$ and $\mathcal{L}=\mathcal{K}-\mathcal{J}$, then the results are similar but the MTEs and LATEs are defined by conditioning on the values of the  $Q$'s rather than on the $Z$'s.
\end{itemize}
They do not invoke any large support assumptions to obtain identification results mentioned just above.

However, if we take $\mathcal{J}=\left\{j\right\}$ and $\mathcal{L}=\left\{l\right\}$, then their corresponding identification results (see Theorem 3 of  \cite{HUV2008}) require a large support condition.
To see their logic, suppose that $K=3$ and that one of the $R_j$'s is sufficiently negative  that the probability of choosing one of the choices is arbitrarily small. This case
 effectively reduces to the binary treatment case;  their LIV estimand, which is the limit of a sequence of Wald estimands,  identifies the MTE. 

We do not rely on this type of   identification-at-infinity strategy
since we identify the MTE via multidimensional cross derivatives.  
Note that our identification results are conditional on the assumption that $\bm{Q}$ is already identified. 
A more stringent assumption on the support of $\bm{Z}$ might be necessary to identify $\bm{Q}$, as demonstrated in \cite{matzkin1993-JoE, matzkin2007-advances}.
In this sense, our assumptions are not necessarily weaker than those of   \cite{HUV2008}.
We view  our identification results  and theirs as complementing each other.

\section{Non-rectangular Threshold Conditions}
\label{sec:nonrect}
The threshold conditions we postulated in Assumption~\ref{assumption:selection} have the ``rectangular'' form $V_j<Q_j(\bm{Z})$. 
Suppose that the threshold conditions $j=1,\ldots,J$ have the  more general form
\[
\bm{\alpha}_j\cdot \bm{U} \leq R_j(\bm{Z})
\]
where the $\bm{\alpha}_j$ are possibly unknown parameter  vectors in $\Real^L$ and $\bm{U}=(U_1,\ldots,U_L)$ is independent of $\bm{Z}$.  For notational simplicity, assume that each (scalar) random variable $u_j\equiv \bm{\alpha}_j\cdot \bm{U}$ has positive density everywhere; denote  $H_j$ its  cdf. Then each threshold condition can be written equivalently as
\[
V_j \equiv H_j(u_j) <  H_j(R_j(\bm{Z})) \equiv Q_j(\bm{Z}).
\]
By construction, each $V_j$ is distributed uniformly over $[0,1]$. Moreover, since each threshold $Q_j$ is an increasing function of the corresponding $R_j$ only, any exclusion restriction assumed on either form applies equally to the other, so that we can hope to identify the thresholds $Q_j$ under suitable assumptions. If they are indeed identified, then we can apply Theorem~\ref{thm:iden} to recover the joint density of $\bm{V}=(V_1,\ldots,V_j)$ and the MTE conditional on $\bm{v}$. 

The random variables $\bm{V}$ and the thresholds $\bm{Q}$ are only auxiliary objects, and the analyst is likely to be more interested in the $\bm{U}$  and $\bm{R}$. If the  cdf $H_j$ were known, then we could write $R_j=H^{-1}_j(Q_j)$ and by the change-of-variables formula,
\[
f_{\bm{u}}(u_1,\ldots,u_J) = f_{\bm{V}}\left(H^{-1}_1(u_1), \ldots, H^{-1}_J(u_J)\right) \times \prod_{j=1}^J H^\prime_j(u_j).
\]
In turn, knowing the joint distribution of $\bm{u}$ directly gives the density of $\bm{U}$ if $L=J$ and the  matrix $\bm{\alpha}$ whose rows are the vectors $\bm{\alpha}_j^\prime$ is invertible:
\[
f_{\bm{U}}(\bm{U}) = f_{\bm{u}}\left(\bm{\alpha} \bm{U}\right) \times \abs{\bm{\alpha}}.
\]

If more realistically the $H_j$ and $\bm{\alpha}_j$ are unknown, we may still use other restrictions.  As an illustration, take  a recursive system, where the matrix $\bm{\alpha}$ is lower-triangular with diagonal terms equal to one.  Then  since $U_2=u_2-\alpha_{21}u_1=H^{-1}_2(V_2)-\alpha_{21}H_1^{-1}(V_1)$, the independence of $U_1$ and $U_2$, for instance, would translate into the independence of $V_1$ and of the variable 
\[
W_2 \equiv H^{-1}_2(V_2)-\alpha_{21}H_1^{-1}(V_1).
\]
Now $V_2=H_2(W_2+\alpha_{21}U_1)$, so this in turn implies that the (identified)  distribution of $V_2$ conditional of $V_1$ must satisfy 
\[
F_{V_2\vert V_1}\left(H_2\left(w_2+\alpha_{21}H_1^{-1}(v_1)\right)\vert v_1\right) = F_{W_2}(w_2)=H_2(w_2)
\]
for all $w_2$ and $v_1$. But as the right-hand-side does not depend on $v_1$, this imposes restrictions that only hold for some choices of $H_1$, $H_2$ and $\alpha_{21}$.  If we only know $H_2$, then 
\[
w_2+\alpha_{21}H_1^{-1}(v_1) = F_{V_2\vert V_1}^{-1}\left(H_2(w_2)\vert v_1\right)
\]
overidentifies the product $\alpha_{21}H_1^{-1}(v_1)$; and if we also know  $H_1$, then it overidentifies $\alpha_{21}$. These results extend directly to higher-dimensional systems.

\end{document}